\newtheorem{definition}{Definition}
\newtheorem{theorem}{Theorem}
\newtheorem*{theorem*}{Theorem}
\newtheorem{proposition}{Proposition}
\newtheorem{remark}{Remark}
\newtheorem{lemma}{Lemma}
\newtheorem{claim}{Claim}
\newtheorem{example}{Illustrative example}
\title{Two is enough: a flip on Bertrand through positive network effects}
\author[1 \Letter]{Renato Soeiro}
\author[1]{Alberto A. Pinto}
\affil[1]{INESC TEC, Faculdade de Engenharia, Universidade do Porto, Portugal}
\shorttitle{Two is enough: a flip on Bertrand competition}
\begin{document}
\maketitle
\begin{abstract}
    We discuss price competition when positive network effects are the only other factor in consumption choices. We show that partitioning consumers into two groups creates a rich enough interaction structure to induce negative marginal demand and produce pure price equilibria where both firms profit. The crucial condition is one group has centripetal influence while the other has centrifugal influence. The result is contrary to when positive network effects depend on a single aggregate variable and challenges the prevalent assumption that demand must be micro-founded on a distribution of consumer characteristics with specific properties, highlighting the importance of interaction structures in shaping market outcomes. 
\end{abstract}
\section{Introduction}
The enduring challenge of price competition models is dealing with the price undercutting dynamics' tendency to generate extreme outcomes. The enduring influence of \citealt{Bertrand1883}'s argument, which turned into the widespread textbook model on duopoly price competition, underscores the importance and difficulty of the challenge. Positive network effects amplify the problem, as the possibility of a bandwagon effect creates a further incentive for price deviations that potentially allow capturing the whole market. Sufficient heterogeneity is the usual answer to resolve the disparity between the extreme equilibrium outcomes produced by simple price competition models and the stability of shared and profitable market outcomes observed in practice. Sufficient consumer heterogeneity mitigates the gains from price deviations by breaking the homogeneous response to price changes and providing negative elasticity to demand.

How much is sufficient?
In this note, we show that partitioning consumers into two groups creates a rich enough interaction structure to produce pure price equilibria where both firms realize (positive) profits, even when positive network effects are the only factor besides price. \textit{Two is enough} to induce the desired properties on the demand function.
\subsection{Motivation}
A way to microfound demand for a price competition model is to adopt a two-stage game framework. In the first stage, firms (sellers) set prices, while in the second, after knowing prices, consumers (users/buyers) choose which firm they want to buy from. The second-stage player's strategy is a contingent plan, which determines a demand function. Subgame perfection provides a foundation for demand by restricting actions to the set of Nash equilibria. However, a demand function is not uniquely determined because of the multiplicity of second-stage equilibria. A demand function is a selection from the correspondence that associates each set of prices with the corresponding set of second-stage Nash equilibria. Together with prices, the latter forms the strategy pair of the two-stage game. The problem of the existence of a (subgame-perfect) equilibrium where both firms have positive profit is the following: is there any selection that produces the desired outcome? Is there a way to microfound the outcome?

For clarity of argument, let us discuss the concepts in a classical Bertrand duopoly model. Two firms $a$ and $b$ set prices, and $N$ consumers choose to maximize $v(j)=U-p^j$, $j=a,b$, where $U\in \mathbb{R}$, hence, inducing Bertrand's argument that consumers want to buy from the cheapest firm. Considering total coverage, a consumers' choice for a given pair $\mathbf{p}=(p^a,p^b)$ can be represented by $\sigma(\mathbf{p})\in [0,1]$, the fraction of consumers that choose $a$. Hence, demand for $a$ and $b$ at prices $\mathbf{p}$ is, respectively, $D^a=\sigma N$ and $D^b=(1-\sigma) N$. The \textit{equilibrium demand correspondence} $Q: \mathbb{R}_{\ge 0}^2 \to [0,N]$ associates to each $\mathbf{p}$ the set of values $\sigma N$ such that $\sigma$ represents a second-stage (consumption) Nash equilibrium for $\mathbf{p}$. Demand functions  $D^j(\mathbf{p})$ are determined by a \textit{selection} from $Q$ for each $\mathbf{p}$, and we refer to $\sigma$ as a selection. Firms want to maximize $\Pi^j(\mathbf{p})=p^j D^j(\mathbf{p})$, $j=a,b$ (no costs for simplicity). An equilibrium is $(\mathbf{p},\sigma)$ where $\sigma$ is a selection whose restriction to $\mathbb{R}_{\ge 0}\times \{p^b\}$ is such that $\Pi^a(p^a,p^b)\ge\Pi^a(p,p^b)=p N \sigma(p,p^b)$ for all $p\in \mathbb{R}_{\ge 0}$, and similarly for $b$. The restriction $Q|_{\mathbb{R}_{\ge 0}\times \{p^b\}}$ is depicted in Figure \ref{fig:A}. Bertrand's paradox is that any selection leads to the equilibrium prices $\mathbf{p}=(0,0)$. The only difference in selections is, in fact, the \textit{tie-breaking rule}. ($Q$ is a singleton for all prices except $p^a=p^b$, for which it is $Q(\mathbf{p})=[0,N]$.)

In its simplest form, network effects are a term in $v(j)$ proportional to the number of other consumers who choose $j$, according to a parameter $\alpha>0$, that is, $v(a)=U-p^a+\alpha N \sigma$, analogously for $v(b)$. Consumers may opt for the more expensive product if the network effects offset the difference. When the price gap is small, various equilibria can emerge. The correspondence $Q$ folds: both monopolies exist beyond the diagonal $p^a=p^b$, connected by an upward-sloped line (plane) where consumers \textit{split}. The (restricted) correspondence is depicted in Figure \ref{fig:A}. In this case, it is no longer true that any selection will lead to $\mathbf{p}=(0,0)$. Some lead to monopoly equilibria.

\begin{figure}[ht]
    \begin{fullwidth}
        \includegraphics[width=0.67\textwidth]{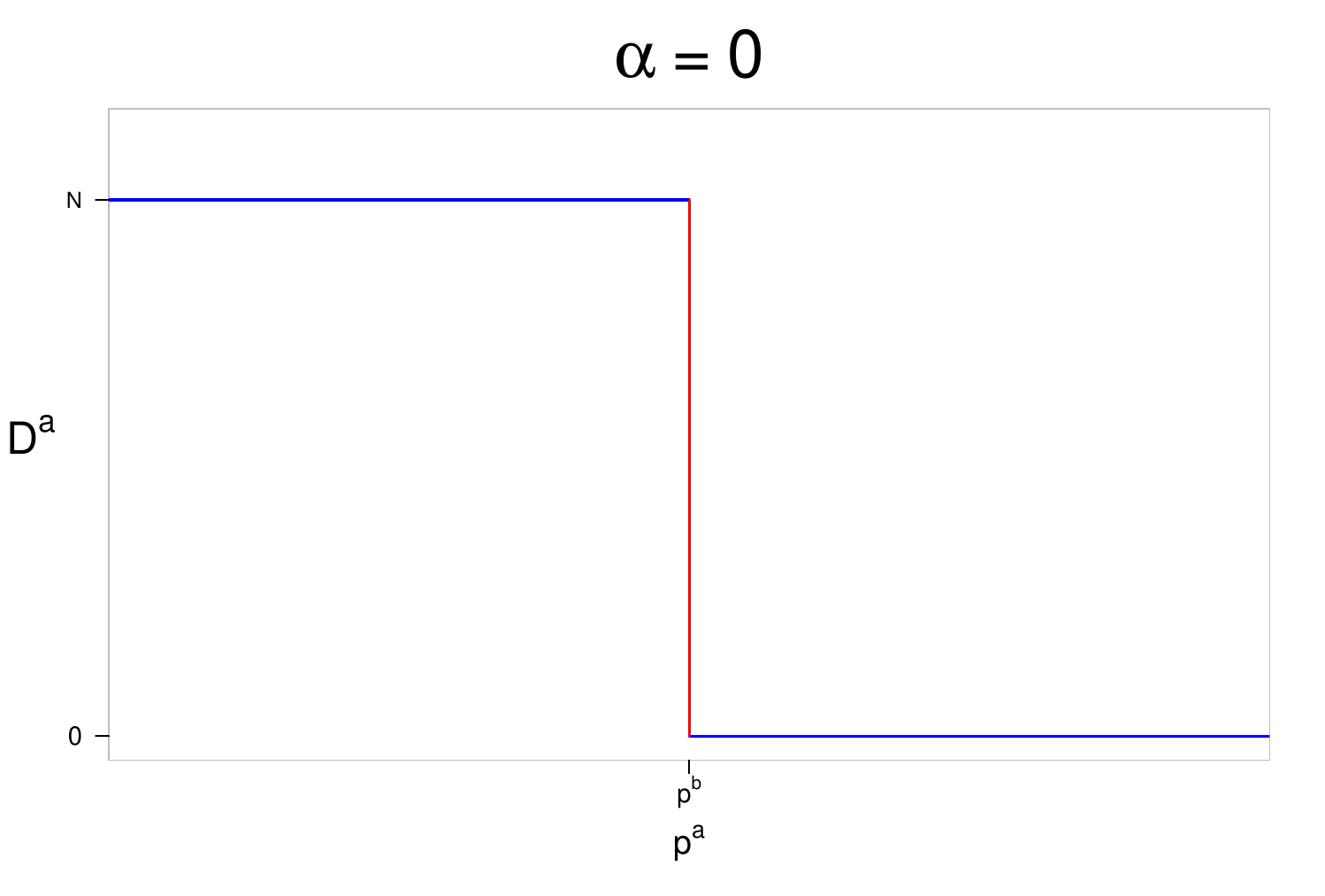}
        \includegraphics[width=0.67\textwidth]{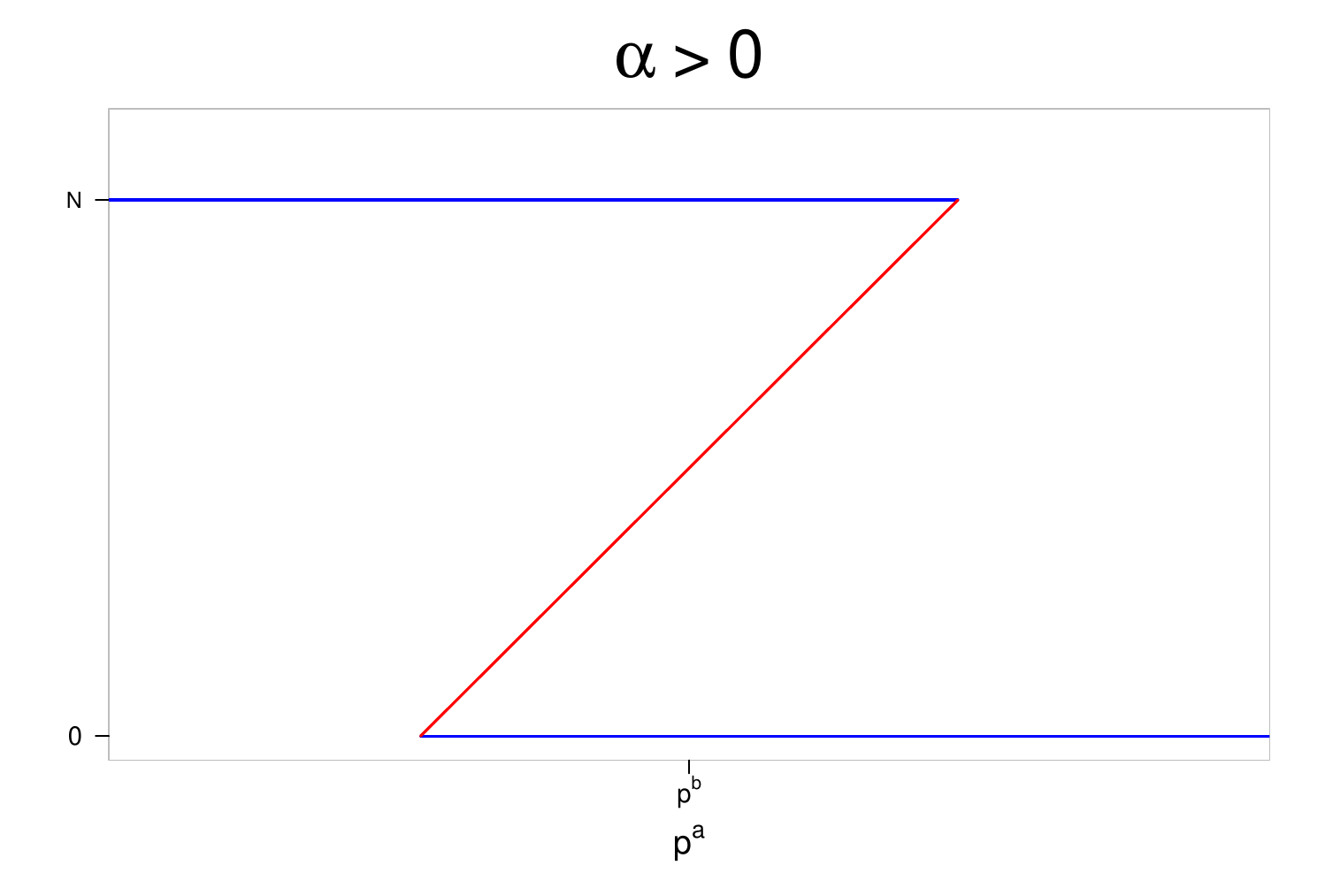}
        \caption{Depicted is a restriction of the equilibrium demand correspondence $Q$ to $\mathbb{R}_{\ge 0}\times\{p^b\}$, i. e. for a fixed price $p^b$ of firm $b$. On the left, the classical Bertrand duopoly model ($\alpha=0$). On the right, the case with positive network effects, $\alpha>0$. The vertical axis is $D^a$ while the horizontal is $p^a$.}
    \label{fig:A}
    \end{fullwidth}
\end{figure}

\newpage
In both above-discussed correspondences, no selection can induce a continuous demand function. In the Bertrand duopoly model, the demand function has a discontinuity at $p^a=p^b$. With network effects ($\alpha>0$), it is possible to make a selection that produces a demand function with only one discontinuity, as in the Bertrand case (although with the restriction $Q(p,p)=\{0,\frac{1}{2}, N\}$), or a translated version with the same form. However, it is also possible to have a demand function with as many jumps as desired. The problem with these two correspondences is that for any selection firms will be drawn towards the discontinuities, even if just one, because, when continuous, demand is either constant or upward-sloping. So, an equilibrium can only lie at a discontinuity, independently of the selection. As such, the demand response to any price deviation is \textit{abrupt}, which reveals the unstable nature of these price competition models.

\subsection{Hotelling's solution}
The motivation, or rationale, for Hotelling's approach, becomes clear in the first lines of the seminal work \citep{Hotelling1929}: \enquote{A profound difference in the nature of the stability of a competitive situation results from} [the fact that in \enquote{actual business}], \enquote{some buy from one seller, some from another, in spite of moderate price differences} [and the response to a price deviation of one firm] \enquote{will in general take place continuously rather than in the abrupt way which has tacitly been assumed.}

The idea proposed by Hotelling to produce continuity is to introduce heterogeneity in the function consumers are maximizing ($v(j)=U-p^j$), making it dependent on each consumer $i$ and firm $j$, by transforming the value $U$ into a function $U(i,j)$. The properties of the distribution of consumer characteristics that determine $U(i,j)$ then translate into properties of the demand function. \citet{Caplin:1991:AAI} show that a pure-strategy price equilibrium exists if the density satisfies a weak form of concavity and has convex support. Hotelling's \textit{sufficient heterogeneity} assumption is \textit{the} widespread solution to the problem and appears under different forms besides standard vertical or horizontal product differentiation. It includes consumer price sensitivities \citep{AllenThisse:1992}, large enough heterogeneity in consumers' tastes \citep{Palma:1985}, or increasing spread of household demand \citep{Hildenbrand:1994}, etc.

The argument for the stability of price competition put forward by Hotelling is inherently a local one: \enquote{If the purveyor of an article gradually increases his price while his rivals keep theirs fixed, the diminution in volume of his sales will in general take place continuously (...)}. It is, in fact, a continuity of demand in the neighborhood of equilibrium prices. Local continuity associated with $D^a=N-D^b$ will force differentiability, which, in turn, will require negative marginal demand in a neighborhood of equilibrium prices as a necessary condition for profitable outcomes. We call \textit{Hotelling selections} to selections from the equilibrium demand correspondence $Q$ that, given an outcome, have such properties.

\subsection{Limitations and network effects}
In the previously discussed models, there are no Hotelling selections. Positive network effects amplify the problem by folding the correspondence in the \textit{wrong} direction. In a model with positive network effects, the most known attempt to circumvent using Hotelling selections is probably Becker's \textit{note on restaurant pricing} \citep{Becker:1991}. The idea was to find an equilibrium where one firm has excess demand while the other has excess capacity. Firms, by fear of going from being 'in' to being 'out' would not deviate. After observing such a phenomenon while going out to dinner and disagreeing with his wife's assertion that the cause was a difference in the restaurant's amenities, Becker attempted to find a rationale using an upward-slopping demand as a primitive. \citet{Karni:1994:SAA} showed that such an equilibrium does not hold under Bertrand competition. As such, one byproduct would be that Becker's wife was right: the restaurants provided different amenities.

There are two caveats in building a network effects model on top of a Hotelling-type solution. One is the implied suggestion that shared market solutions exist because products are differentiated. Two is that network effects can disrupt the underlying properties induced in the demand function by a Hotelling-type assumption. So, its impact is generally assumed to be sufficiently small by imposing a bound. Network effects are a perturbation to the underlying model. The approach can be limiting if the object of study is network effects. Furthermore, it means the driver of consumers' decisions or outcomes is not network effects. Illustrative example \ref{ex:Grilo} discusses the issue through an explicit example. 

\begin{example}[From \cite{Grilo:2001:PCW}]
\label{ex:Grilo}
    Two stores located at $x^a$ and $x^b$, transportation cost $t$. The consumer located at $x$ choosing $a$ gets the following payoff (analogous for $b$),
        $$-p^a-t(x-x^a)^2+\alpha N\sigma.$$
    A shared equilibrium where both firms profit exists if, and only if,
        $$\alpha N < t(x^b-x^a).$$
    Let us freely interpret this inequality. The left-hand side is the maximum possible influence exerted on a consumer. The maximum value of network effects. On the right-hand side is the cost of changing the decision. So what the bound means is that even if all consumers exerted influence in the same direction to change someone's decision, that would never offset the cost of changing. The maximum network effects are lower than the cost of traveling from one firm to the other. From this perspective, it is a strong assumption to make if one wants to study network effects. The driver of the decision is the transportation cost. So, in some cases, we might want to remove the bound. In Figure \ref{f:ex:Grilo}, we depict how the effect translates into the respective correspondences, illustrating the effect on marginal demand.
\end{example}

\begin{figure}[ht]
    \begin{fullwidth}
        \includegraphics[width=0.29175\paperwidth]{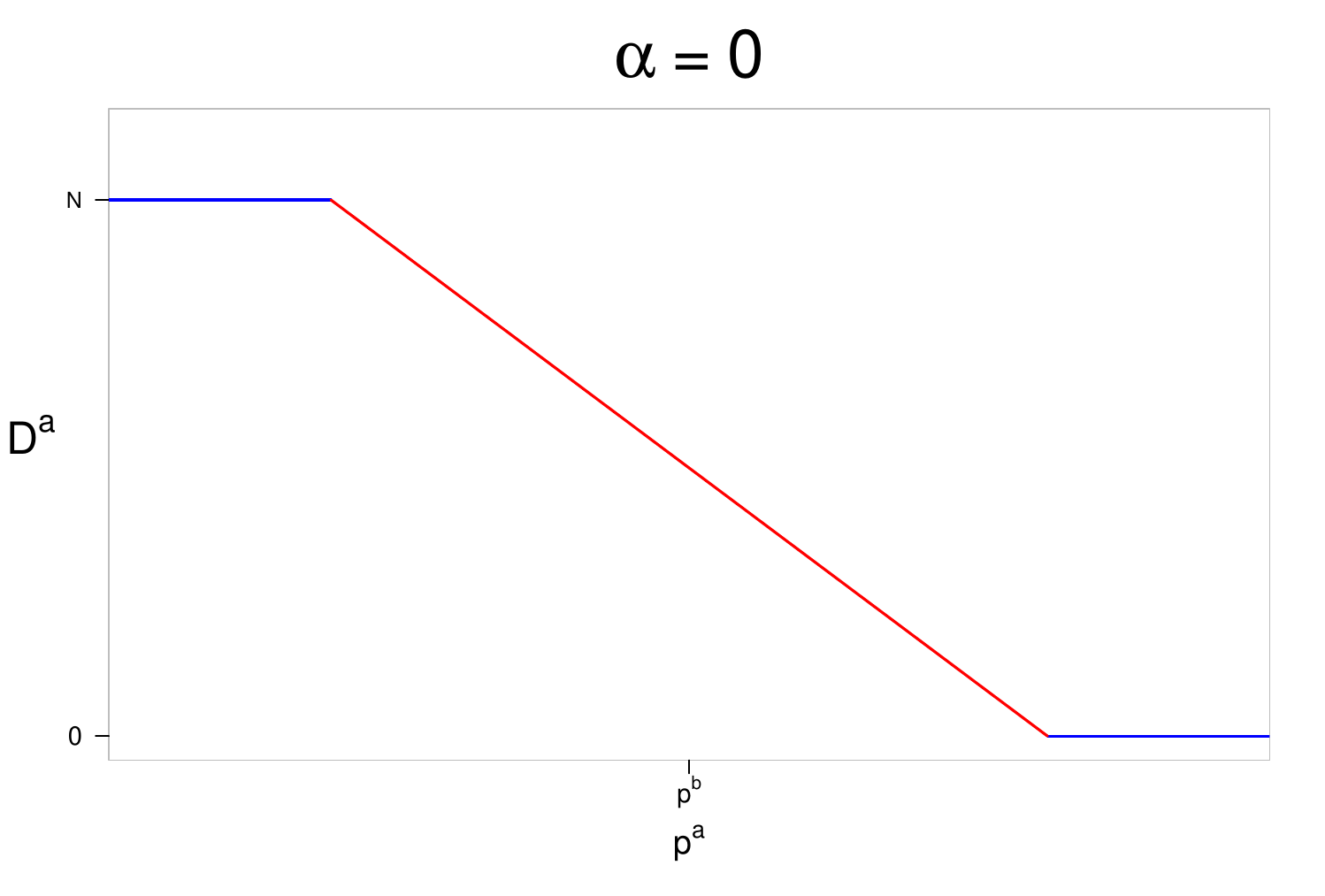}
        \includegraphics[width=0.29175\paperwidth]{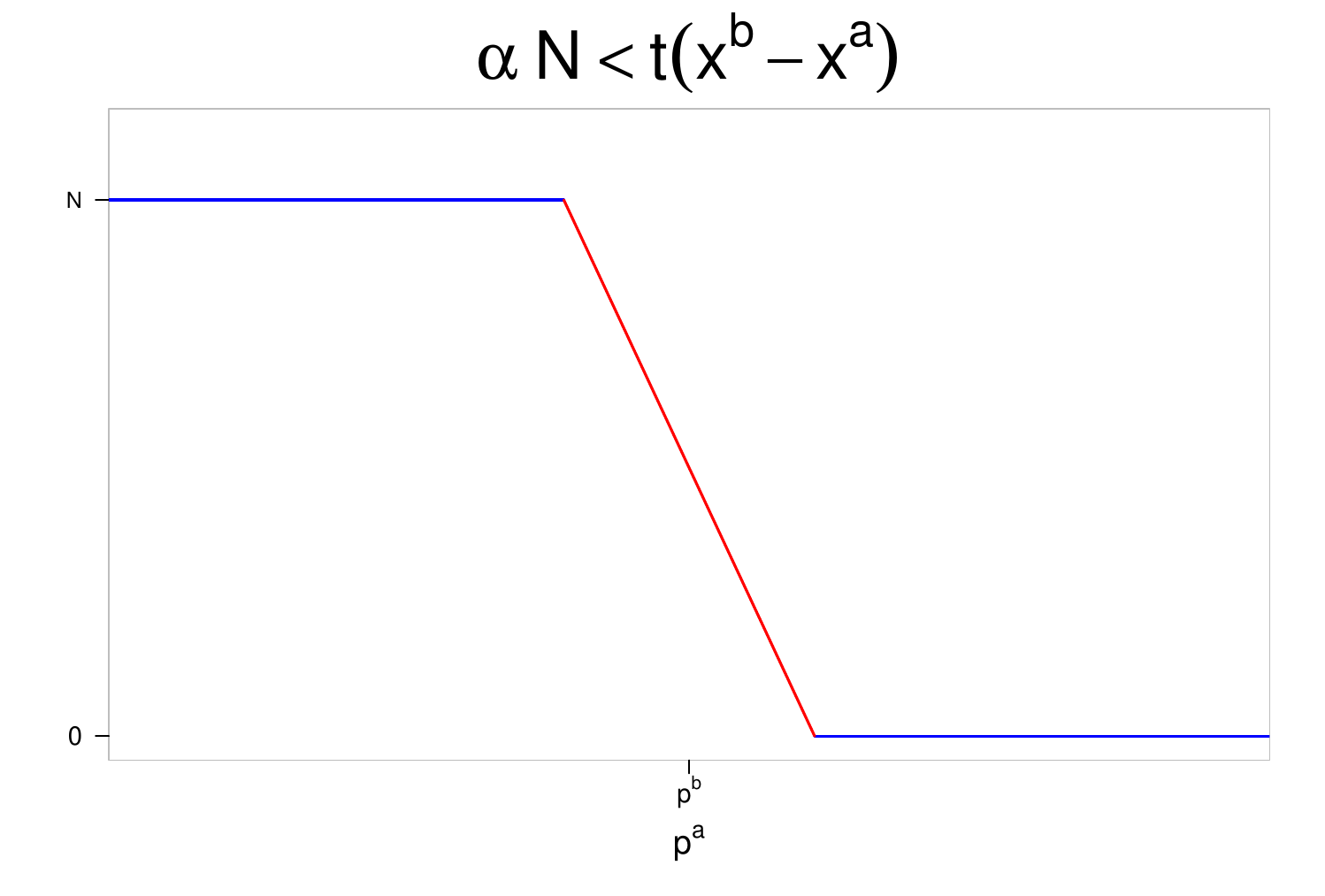}
        \includegraphics[width=0.29175\paperwidth]{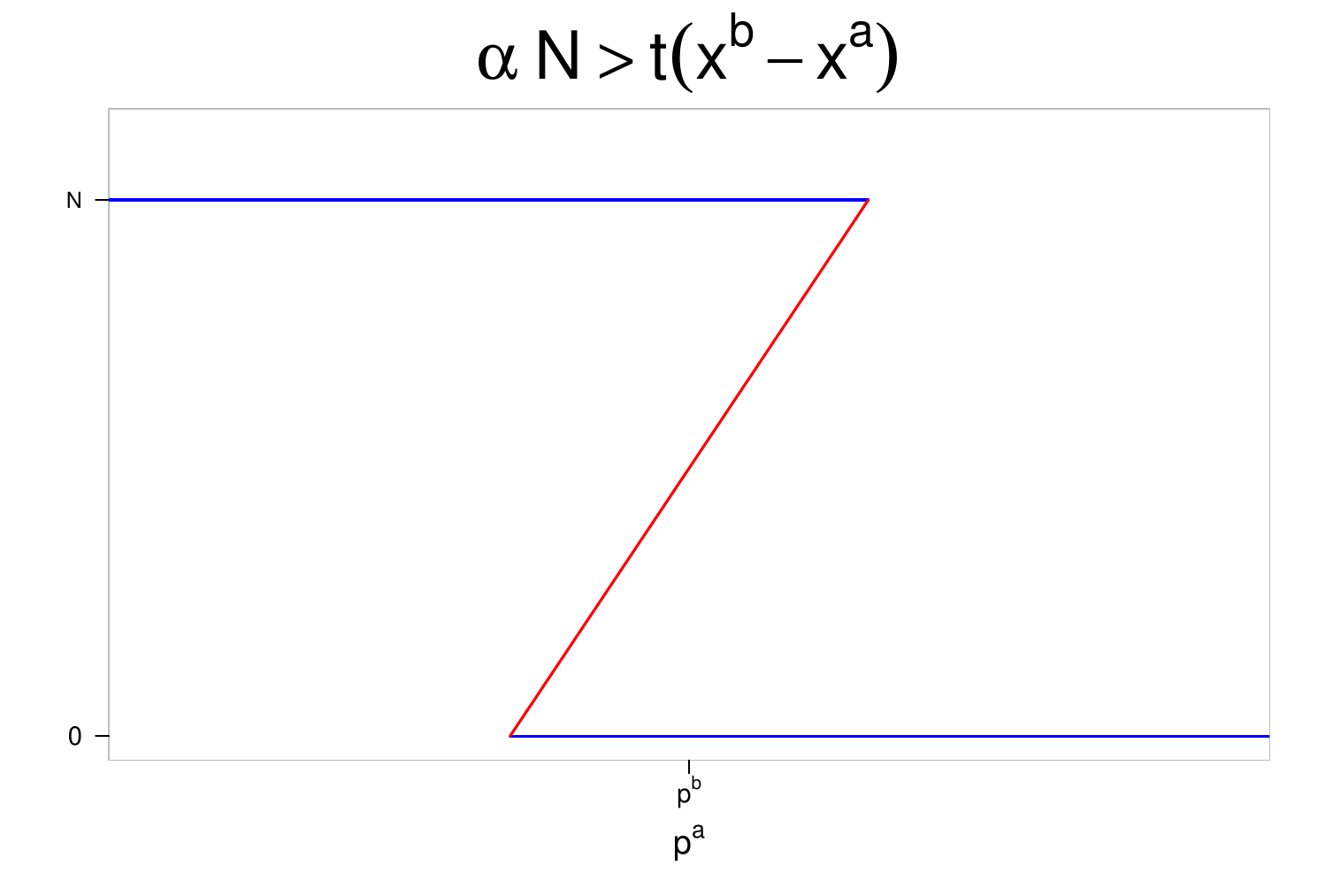}
    \caption{Correspondences of Illustrative example \ref{ex:Grilo} (restricted to $\mathbb{R}_{\ge 0}\times\{p^b\}$). From left to right we have the three models: no network effects; ($\alpha>0$) bounded positive network effects; ($\alpha>0$) a model that does not satisfy the bound. Note that the bound is needed to guarantee that the correspondence doesn't \textit{fold back}. The bound ensures that the \textit{stretch} from product differentiation overcomes the \textit{fold} from network effects.}
    \label{f:ex:Grilo}
    \end{fullwidth}
\end{figure} 

\subsection{A different approach}
In this note, we propose a different approach to create the possibility of Hotelling selections. We consider a partition of consumers into two groups, no bound on network effects, and no other factor, like product differentiation. We will characterize the interaction structures whose associated correspondences have Hotelling selections. Furthermore, we will show under which conditions an equilibrium where both firms have positive profits exists for a Hotelling selection.
The subsequent section has the setup and \nameref{s:model}. We end with a \nameref{s:discussion}.

\section{Main results}
\label{s:model}
\subsection{Partitioning consumers into groups}
Suppose there are two groups of consumers, indistinguishable within each group. For simplicity, we assume groups have the same size $N_1=N_2=1$, so $N=2$. Consumer choices are now represented by $\boldsymbol{\sigma}=(\sigma_1, \sigma_2)$, where $\sigma_i$ is the proportion of group $i$ who choose $a$ for the respective price. So $D^a=\sigma_1+\sigma_2$ and $D^b=2-(\sigma_1+\sigma_2)$. Assume $U(i,j)=0$. The interaction structure is given by
    $$
    M_{\alpha}= 
    \begin{pmatrix}
        \alpha_{11} & \alpha_{12}\\ 
        \alpha_{21} & \alpha_{22}
    \end{pmatrix}.
    $$
where $\alpha_{ii'}\ge 0$ represents how much an element of group $i$ is influenced by an element of group $i'$. For a member of group $i$, the value of choosing $a$ or $b$ under $\boldsymbol{\sigma}$ is given by
$$v_{i}(a;\boldsymbol{\sigma})=-p^a+\alpha_{i1} \sigma_1+\alpha_{i2}\sigma_2,$$
$$v_{i}(b;\boldsymbol{\sigma})=-p^b+\alpha_{i1}(1-\sigma_1)+\alpha_{i2}(1-\sigma_2).$$
The matrix $M_{\alpha}$ captures the underlying game structure, and we assume the determinant of $M_{\alpha}$ is non-zero. This new structure for the consumption stage, together with the pricing game in the first stage, forms a two-stage game, and we refer to it by $M_{\alpha}$. 

As everything relating to firms is symmetric, we will mostly focus on firm $a$. The equilibirum demand correspondence is now $Q:\mathbb{R}_{\ge0}^2\to [0,2]$ where $Q(\mathbf{p})$ is the set of values $2d$ for which there is a second stage equilibrium representable by $\boldsymbol{\sigma}$ and such that $\sigma_1+\sigma_2=d$. The notion of equilibrium is the standard subgame-perfect equilibrium, which we abbreviate to the equilibrium of a game $M_{\alpha}$. An equilibrium is $(\mathbf{p}^*,\boldsymbol{\sigma}(\mathbf{p}))$ where $\boldsymbol{\sigma}$ is a selection from $Q$ and $\mathbf{p}^*$ is a Nash equilibrium for the firms pricing game, with demand functions induced by $\boldsymbol{\sigma}(\mathbf{p})$.
\subsection{Results}
\label{ss:results}
We say that a group \textit{splits} when $0<\sigma_i<1$. We leave the second stage threshold characterization of the Nash equilibria for the appendix. It is a straightforward computation of the set of prices where each of the nine possible classes of equilibria\sidenote{There are nine classes because each group has three options, 0, 1 or to split, so nine combinations.} is selectable. Here, we will discuss only properties of the classes that induce nonhorizontal parts of the correspondence, i.e., equilibria where at least one group splits. Denote by $\boldsymbol{\sigma}|_P$ the restriction of the selection $\boldsymbol{\sigma}$ to a subset of prices $P$. 

Suppose $S$ is a subset of prices for which the equilibria where only group $i$ splits are always selectable. The marginal group demand for the selection $\boldsymbol{\sigma}|_S$ that always selects equilibria where only group $i$ splits, is
$$\frac{\partial \sigma_i}{\partial p^a}=\frac{1}{2\alpha_{ii}}=\frac{\partial (1-\sigma_i)}{\partial p^b}.$$
The selection that always chooses equilibria where both groups split is
$$\sigma_1(\mathbf{p})=\frac{1}{2}+\frac{{\kappa_1}}{2}(p^a-p^b), ~~~ \sigma_2(\mathbf{p})=\frac{1}{2}+\frac{\kappa_2}{2}(p^a-p^b),$$
with domain $|p^a-p^b| < \min\left\lbrace\frac{1}{|\kappa_1|},\frac{1}{|\kappa_2|}\right\rbrace$, and where
$$\kappa_1=\frac{\alpha_{22}-\alpha_{12}}{\det M_{\alpha}}, ~~~~\kappa_2=\frac{\alpha_{11}-\alpha_{21}}{\det M_{\alpha}}.$$
Hence,
$$\frac{\partial \sigma_i}{\partial p^a}=\kappa_i=\frac{\partial (1-\sigma_i)}{\partial p^b}.$$

\begin{lemma}
    \label{l:demand}
    A game $M_{\alpha}$ has a selection $\boldsymbol{\sigma}$ whose restriction to an open subset of prices $P$ is differentiable and induces negative marginal demand in $P$ if, and only if, the game has $\kappa_1+\kappa_2<0$ and $P$ is such that $|p^a-p^b| < \min\left\lbrace\frac{1}{|\kappa_1|},\frac{1}{|\kappa_2|}\right\rbrace$ for every $\mathbf{p}\in P$. 
    Furthermore, $\boldsymbol{\sigma}|_P$ represents the equilibria where both groups split, thus inducing, in $P$, the following demand $$D^j(\mathbf{p})=1+\frac{\kappa_1+\kappa_2}{2}(p^j-p^i).$$
\end{lemma}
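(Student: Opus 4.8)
The plan is to prove the two directions of the equivalence separately and then read off the demand formula, using two inputs supplied before the statement (and the second‑stage equilibrium conditions, whose full threshold form is in the appendix): a group whose proportion is interior, $0<\sigma_i<1$, must be indifferent, i.e. $v_i(a;\boldsymbol\sigma)=v_i(b;\boldsymbol\sigma)$, whereas a group at a corner $\sigma_i\in\{0,1\}$ only obeys a weak best‑response inequality; and on the region where both groups are interior the two indifference equations have the unique solution $\sigma_i(\mathbf{p})=\tfrac12+\tfrac{\kappa_i}{2}(p^a-p^b)$, legitimate exactly where each $\sigma_i\in(0,1)$, i.e. where $|p^a-p^b|<\min\{1/|\kappa_1|,1/|\kappa_2|\}$.

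For the \emph{if} direction, assume $\kappa_1+\kappa_2<0$ and that $P$ lies in that strip. I would take $\boldsymbol\sigma$ to be, on $P$, the both‑groups‑split selection above, extended off $P$ to any global selection from $Q$ (possible since $Q(\mathbf{p})\neq\emptyset$ for every $\mathbf{p}$, from the appendix). On $P$ the strip condition gives $\sigma_i(\mathbf{p})\in(0,1)$, so both indifference equations hold by construction; since an indifferent group best‑responds with any split, $\boldsymbol\sigma(\mathbf{p})$ is a genuine second‑stage equilibrium, hence a legitimate selection, at every $\mathbf{p}\in P$. It is affine in $\mathbf{p}$, hence differentiable, and $D^a=\sigma_1+\sigma_2=1+\tfrac{\kappa_1+\kappa_2}{2}(p^a-p^b)$ has marginal demand of the (negative) sign of $\kappa_1+\kappa_2$, with $D^b=2-D^a$ symmetric. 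This is exactly the displayed demand, so the direction is finished.

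For the \emph{only if} direction, let $\boldsymbol\sigma$ be a selection whose restriction to an open $P$ is differentiable with negative marginal demand, and fix $\mathbf{p}_0\in P$. The crucial step is to show both groups are interior at $\mathbf{p}_0$. Suppose instead $\sigma_i(\mathbf{p}_0)\in\{0,1\}$ for some $i$, say $i=1$; since $\sigma_1$ is differentiable with $0\le\sigma_1\le1$ on the open set $P$, $\mathbf{p}_0$ is a local extremum of $\sigma_1$, so $\nabla\sigma_1(\mathbf{p}_0)=0$, whence $\partial_{p^a}D^a(\mathbf{p}_0)=\partial_{p^a}\sigma_2(\mathbf{p}_0)$. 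If also $\sigma_2(\mathbf{p}_0)\in\{0,1\}$, the same argument gives $\partial_{p^a}\sigma_2(\mathbf{p}_0)=0$, so marginal demand vanishes — a contradiction. If $\sigma_2(\mathbf{p}_0)\in(0,1)$, then $\sigma_2\in(0,1)$ on a neighbourhood by continuity, so group $2$ is indifferent there; differentiating the identity $2\alpha_{21}\sigma_1+2\alpha_{22}\sigma_2\equiv(p^a-p^b)+\alpha_{21}+\alpha_{22}$ in $p^a$ and using $\partial_{p^a}\sigma_1(\mathbf{p}_0)=0$ gives $2\alpha_{22}\partial_{p^a}\sigma_2(\mathbf{p}_0)=1$, so $\alpha_{22}>0$ and $\partial_{p^a}D^a(\mathbf{p}_0)=1/(2\alpha_{22})>0$ — again a contradiction. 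Hence $\sigma_1(\mathbf{p}_0),\sigma_2(\mathbf{p}_0)\in(0,1)$ for every $\mathbf{p}_0\in P$; both indifference equations then hold near each such point, and since $\det M_\alpha\neq0$ this pins $\boldsymbol\sigma|_P$ down to the both‑groups‑split selection. Its marginal demand is then the constant $\tfrac{\kappa_1+\kappa_2}{2}$, so negativity forces $\kappa_1+\kappa_2<0$, while $\sigma_i(\mathbf{p})\in(0,1)$ throughout $P$ forces $|p^a-p^b|<\min\{1/|\kappa_1|,1/|\kappa_2|\}$ on $P$. Since $\boldsymbol\sigma|_P$ is the both‑groups‑split selection, $D^j(\mathbf{p})=1+\tfrac{\kappa_1+\kappa_2}{2}(p^j-p^i)$, which is the remaining claim.

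The main obstacle is this \emph{only if} step: ruling out that a differentiable selection rests at, or transitions among, the corner and one‑group‑splits configurations while still producing downward‑sloping demand. The device that settles it is the interior‑extremum observation (a proportion pinned at $0$ or $1$ at an interior price of $P$ has vanishing gradient), together with the fact that once one group's marginal contribution is thereby killed, a splitting second group contributes the strictly positive $1/(2\alpha_{ii})$, so only the both‑groups‑split regime can yield negative marginal demand. A secondary point to spell out is that the both‑split indifference system is uniquely solvable precisely because $\det M_\alpha\neq0$, which is where that standing hypothesis is used.
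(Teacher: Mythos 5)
Your proof is correct and follows essentially the same route as the paper's: the \emph{if} direction selects the both-groups-split equilibria on the strip $|p^a-p^b|<\min\{1/|\kappa_1|,1/|\kappa_2|\}$, and the \emph{only if} direction reduces to the observation that the corner and one-group-splits configurations yield marginal demand $0$ or $1/(2\alpha_{ii})\ge 0$, leaving the both-split class as the only one compatible with downward-sloping demand. Your interior-extremum device (a proportion pinned at $0$ or $1$ at an interior point of the open set $P$ has vanishing gradient) is in fact a more careful justification of the step the paper merely asserts, namely that differentiability on an open set forces the selection to remain in the both-split class throughout $P$.
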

The Lemma gives the conditions under which Hotelling selections exist in the correspondence of a given game $M_\alpha$. As $\frac{1}{2\alpha_{ii}}\ge 0$, the only equilibria which induce negative marginal demand are the ones where both groups split. Figure \ref{f:C} depicts three classes of correspondences for games that satisfy the conditions of Lemma \ref{l:demand}. The restriction $\boldsymbol{\sigma}|_P$ must be selected from the green line.
\begin{proof}
    $(\Rightarrow)$ Let $P\subset \mathbb{R}_{\ge0}^2$ and $\boldsymbol{\sigma}$ be a selection such that $\boldsymbol{\sigma}|_P$ is differentiable and induces negative marginal demand. As it is differentiable and $P$ open it must be that it keeps invariant whether a group splits or not (note that it is $\sigma_i$ which is differentiable and not $D^j$). As $\frac{\partial D^j}{\partial p^j}=\frac{\partial \sigma_1}{\partial p^j}+\frac{\partial \sigma_2}{\partial p^j}<0$, then the only possibility is to select the equilibria where both split, which is the only possibility for negative marginal demand, given by $\frac{\partial D^j}{\partial p^j}=\kappa_1+\kappa_2<0$. Hence, $P$ must be a subset of prices for which this class of equilibria exists.
    $(\Leftarrow)$ Simply take a $\boldsymbol{\sigma}|_P$ that always selects equilibria where both groups split. It is possible because $P$ is such that $|p^a-p^b| < \min\left\lbrace\frac{1}{|\kappa_1|},\frac{1}{|\kappa_2|}\right\rbrace$ for every $\mathbf{p}\in P$ (see Appendix \ref{app:tc}), and it is differentiable. It induces negative marginal because $\kappa_1+\kappa_2<0$.
\end{proof}

\begin{theorem}
    \label{t:prices}
    A pair of prices $\mathbf{p}$ is a local maximum for a selection $\boldsymbol{\sigma}$ which is continuous in a neighborhood $P$ of $\mathbf{p}$, and produces positive profits for both firms, if, and only if, $(i)$ $\kappa_1+\kappa_2<0$; $(ii)$ in $\boldsymbol{\sigma}|_P$ both groups split; and $(iii)$ prices are 
    $$p^{a}=p^{b}=p^*=\dfrac{2}{|\kappa_1+\kappa_2|}.$$
\end{theorem}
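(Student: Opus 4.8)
The plan is to identify, by a case analysis over the nine equilibrium classes, that any profitable local maximum must sit in the ``both groups split'' class, to apply Lemma~\ref{l:demand} to extract negative marginal demand there, and then to solve the resulting pair of first-order conditions. Throughout write $\kappa := \kappa_1 + \kappa_2$.

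For necessity, suppose $\mathbf{p}$ is a local maximum for a selection $\boldsymbol{\sigma}$ continuous on a neighborhood $P$, with $\Pi^a(\mathbf{p}), \Pi^b(\mathbf{p}) > 0$; then $p^a, p^b > 0$ and $0 < D^a(\mathbf{p}) < 2$. I would first argue that at $\mathbf{p}$ both groups split. If neither group splits at $\mathbf{p}$ then $D^a(\mathbf{p}) \in \{0, 2\}$, forcing a zero profit for one firm, which is excluded. If only group $i$ splits at $\mathbf{p}$, then $\sigma_{i'}(\mathbf{p})$ sits at a corner with the non-splitting group strictly preferring one firm, so by continuity that class persists as $p^a$ is raised slightly with $p^b$ fixed; on it $\partial D^a/\partial p^a = 1/(2\alpha_{ii}) > 0$, so $\Pi^a = p^a D^a$ would be strictly increasing just to the right of $\mathbf{p}$ (using $p^a, D^a > 0$), contradicting the local maximum — excluded (the degenerate subcase where the corner group is exactly indifferent at $\mathbf{p}$ is handled by the first-order argument of the next step applied at that kink). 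Hence $\sigma_1(\mathbf{p}), \sigma_2(\mathbf{p}) \in (0,1)$, which together with $\det M_\alpha \ne 0$ means the both-split equilibrium is the unique consumption equilibrium at $\mathbf{p}$ and at all nearby prices; this gives item $(ii)$ and places $\mathbf{p}$ in the open domain of the both-split selection. For the marginal demand there not to make $\Pi^a$ increasing to the right we need $\partial D^a/\partial p^a = \kappa < 0$, which is item $(i)$.

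On a neighborhood of $\mathbf{p}$ Lemma~\ref{l:demand} now gives $D^j = 1 + \tfrac{\kappa}{2}(p^j - p^i)$, so $\Pi^a = p^a\bigl(1 + \tfrac{\kappa}{2}(p^a - p^b)\bigr)$ and $\Pi^b = p^b\bigl(1 + \tfrac{\kappa}{2}(p^b - p^a)\bigr)$ are concave in their own price. The first-order conditions $1 + \kappa p^a - \tfrac{\kappa}{2} p^b = 0$ and $1 + \kappa p^b - \tfrac{\kappa}{2} p^a = 0$ must hold; subtracting gives $\tfrac{3\kappa}{2}(p^a - p^b) = 0$, hence $p^a = p^b$ since $\kappa \ne 0$, and substituting back yields $p^a = p^b = -2/\kappa = 2/|\kappa_1 + \kappa_2|$, which is item $(iii)$. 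At these prices $D^a = D^b = 1$, so $\Pi^a = \Pi^b = p^* > 0$, and the second-order conditions $\partial^2 \Pi^j/\partial (p^j)^2 = \kappa < 0$ hold automatically.

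For sufficiency, assume $(i)$--$(iii)$. Since $\kappa \ne 0$, $\kappa_1$ and $\kappa_2$ are not both zero, so $\min\{1/|\kappa_1|, 1/|\kappa_2|\} > 0$ and some neighborhood $P$ of $(p^*, p^*)$ lies in the both-split domain; take $\boldsymbol{\sigma}|_P$ to be the both-split selection, which is differentiable and hence continuous (Appendix~\ref{app:tc}). Then the profits are the concave functions above, the first-order conditions hold at $(p^*, p^*)$ because $1 + \tfrac{\kappa}{2} p^* = 1 + \tfrac{\kappa}{2}\cdot\tfrac{2}{|\kappa|} = 0$, and $\partial^2 \Pi^j/\partial (p^j)^2 = \kappa < 0$, so $(p^*, p^*)$ is a local maximum for $\boldsymbol{\sigma}$ with both profits equal to $p^* > 0$. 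I expect the only genuinely delicate point to be in the necessity direction: making fully rigorous that continuity of $\boldsymbol{\sigma}$ (as opposed to mere differentiability) forces the both-split class on a whole neighborhood of $\mathbf{p}$ — that is, that one cannot splice pieces of different equilibrium classes near $\mathbf{p}$ to manufacture a profitable local maximum, including at the ``kink'' prices where a group is simultaneously at a corner and indifferent. Once the both-split class is pinned down, everything else is the elementary algebra of two coupled quadratics.
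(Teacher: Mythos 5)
Your proof is correct and follows essentially the same route as the paper: reduce to the both-groups-split class, invoke Lemma~\ref{l:demand} for the demand expression, and solve the first-order conditions to obtain $p^*=2/|\kappa_1+\kappa_2|$. The only real difference is in the necessity direction, where you eliminate the other equilibrium classes by hand; the paper instead uses $D^a=N-D^b$ to argue that tangency of the isoprofits at a profitable local maximum forces demand to be differentiable with negative slope, which is precisely the tool that disposes of the ``kink'' subcase (a group simultaneously at a corner and indifferent) that you flag as delicate but do not fully resolve.
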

\begin{proof}
    $(\Leftarrow)$ With $(i)$ and $(ii)$ we get the expression for demand in $P$ given by Lemma \ref{l:demand}. Then, applying the standard FOC we get the solution $p^*$.
    $(\Rightarrow)$ Suppose $\boldsymbol{\sigma}$ is continuous in a neighbourhood $P$ of $\mathbf{p}$, and produces positive profits for both firms. As $D^a=N-D^b$, isoprofits are tangent, and because $p^j>0$ for $j=a,b$, demand must be differentiable in a neighborhood of $\mathbf{p}$, with negative derivative. Lemma \ref{l:demand} provides $(i)$, $(ii)$, and $(iii)$.
\end{proof}

By Theorem \ref{t:prices}, a local equilibrium with continuous $\boldsymbol{\sigma}(\mathbf{p})$ and positive profits only occurs when both groups split. By Lemma \ref{l:demand}, in that equilibrium each firm gets half of each group, leading to demand $D^*(\mathbf{p}^*)=1$. Each firm having one group (which would lead to the same $D=1$) is too stable for consumers, i.e., they would not react to small price changes. In the equilibria where both groups split, the response to price deviations is consumers from both groups move and change firms. When at most one group splits, a local equilibrium in the conditions of Theorem \ref{t:prices} does not exist because either demand is constant or non-differentiable, or marginal demand depends only on $\alpha_{ii}\ge 0$ if only group $i$ splits, thus being positive or constant.

\begin{proposition}
    \label{p:kpos}
    $\kappa_i>0$ for some group $i$.
\end{proposition}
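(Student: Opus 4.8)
The plan is to argue by contradiction: assume $\kappa_1\le 0$ and $\kappa_2\le 0$ simultaneously, and derive a contradiction with the standing hypotheses that all entries $\alpha_{ii'}\ge 0$ and $\det M_{\alpha}\ne 0$. Since $\kappa_1$ and $\kappa_2$ share the common denominator $\det M_{\alpha}$, the sign of each $\kappa_i$ is fixed by the sign of its numerator relative to the sign of $\det M_{\alpha}$, so the natural case split is on the sign of $\det M_{\alpha}$ (which is nonzero by assumption, hence strictly positive or strictly negative).

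First I would treat $\det M_{\alpha}>0$. Then $\kappa_1\le 0$ forces $\alpha_{22}-\alpha_{12}\le 0$ and $\kappa_2\le 0$ forces $\alpha_{11}-\alpha_{21}\le 0$, i.e. $\alpha_{22}\le\alpha_{12}$ and $\alpha_{11}\le\alpha_{21}$. Because every entry is nonnegative, multiplying these two inequalities gives $\alpha_{11}\alpha_{22}\le\alpha_{12}\alpha_{21}$, hence $\det M_{\alpha}=\alpha_{11}\alpha_{22}-\alpha_{12}\alpha_{21}\le 0$, contradicting $\det M_{\alpha}>0$.

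The case $\det M_{\alpha}<0$ is symmetric: now $\kappa_1\le 0$ and $\kappa_2\le 0$ reverse the numerator inequalities to $\alpha_{22}\ge\alpha_{12}$ and $\alpha_{11}\ge\alpha_{21}$, so multiplying (again using nonnegativity of all entries) yields $\alpha_{11}\alpha_{22}\ge\alpha_{12}\alpha_{21}$, i.e. $\det M_{\alpha}\ge 0$, contradicting $\det M_{\alpha}<0$. Since these two cases are exhaustive, the assumption that $\kappa_1\le 0$ and $\kappa_2\le 0$ is impossible, so at least one $\kappa_i$ is strictly positive.

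There is essentially no hard step here; the only point requiring care is that multiplying the two numerator inequalities is legitimate precisely because all four coefficients $\alpha_{ii'}$ are nonnegative — with unrestricted signs the argument would break. I would optionally close with the economic reading anticipated in the introduction: $\kappa_i>0$ is exactly the \emph{centrifugal} condition for group $i$, so the proposition says one of the two groups must be centrifugal, which (together with Lemma~\ref{l:demand} and Theorem~\ref{t:prices}) is the structural reason that a model driven by a single aggregate network variable cannot generate negative marginal demand.
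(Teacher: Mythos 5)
Your proof is correct, and the central arithmetic fact is the same one the paper relies on: with all $\alpha_{ii'}\ge 0$, the two numerator inequalities $\alpha_{22}\lessgtr\alpha_{12}$ and $\alpha_{11}\lessgtr\alpha_{21}$ can be multiplied to control the sign of $\det M_{\alpha}$. The organization differs, though. The paper argues directly, enumerating cases on the orderings of the entries ($\alpha_{22}<\alpha_{12}$ combined with each of $\alpha_{11}<\alpha_{21}$, $>$, $=$, then "analogously" for the rest), and in the mixed case it simply checks both possible signs of the determinant by hand. You instead argue by contradiction with a two-way split on the sign of $\det M_{\alpha}$, which is shorter, makes the product inequality the single load-bearing step, and avoids the paper's residual "remaining cases follow analogously." One small thing your version handles more cleanly is the boundary: by assuming only $\kappa_i\le 0$ you automatically cover the degenerate situations (e.g.\ $\alpha_{11}=\alpha_{21}$, where the paper needs a separate subcase to note $\kappa_2=0$ and $\kappa_1>0$). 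Note that multiplying $\alpha_{22}\le\alpha_{12}$ and $\alpha_{11}\le\alpha_{21}$ to get $\alpha_{11}\alpha_{22}\le\alpha_{12}\alpha_{21}$ uses nonnegativity of the entries on \emph{both} sides of each inequality, which holds here; you correctly flag this as the only delicate point.

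One correction to your closing remark: $\kappa_i>0$ is \emph{not} the centrifugal condition for group $i$. By the paper's definition, group $i$ is centrifugal when $\alpha_{ii}<\alpha_{ji}$, and the identity $\kappa_1\det M_{\alpha}=\alpha_{22}-\alpha_{12}$ shows that the sign of $\kappa_1$ is determined by the type of influence of the \emph{other} group together with the sign of the determinant. Indeed, the proposition is consistent with both $\kappa_i>0$ (which occurs exactly when both groups are centripetal or both centrifugal); the statement that one group is centripetal and the other centrifugal is equivalent to $\kappa_1\kappa_2<0$, which is Remark \ref{r:centri}, not this proposition. This does not affect the validity of your proof, only the interpretation appended to it.
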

\begin{proof}
	Suppose $\alpha_{22}<\alpha_{12}$. We have the following three cases:
	(i) 	if $\alpha_{11}<\alpha_{21}$, then $\det M_{\alpha}<0$ and $\kappa_1>0$ (and $\kappa_2 >0$);
	(ii) 	if $\alpha_{11}>\alpha_{21}$, then, if $\det M_{\alpha}<0$ we have $\kappa_1>0$ and $\kappa_2<0$; if $\det M_{\alpha}>0$ we have $\kappa_1<0$ and $\kappa_2>0$.
	(iii)	if $\alpha_{11}=\alpha_{21}=\alpha$, then $\kappa_1=\frac{1}{\alpha}>0$ and $\kappa_2=0$.
	The remaining cases ($\alpha_{22}\ge \alpha_{12}$) follow analogously.
\end{proof}
The marginal demand of each group has the same sign for both firms. From Proposition \ref{p:kpos}, we get that $\kappa_1+\kappa_2<0$ implies $\kappa_1\kappa_2<0$. Hence, it is positive for one group and negative for the other. In related literature on price competition with network effects and more than one group, the necessary condition of negative marginal demand to obtain pure price equilibria is often translated into negative marginal group demand \citep[e.g.][]{Banerji:2009:LNE}. It corresponds to the scenario where $\kappa_1<0$ and $\kappa_2<0$. By Propositon \ref{p:kpos}, the scenario is impossible in this setup. In a strategy where both groups split (or any other), it is impossible to have negative marginal consumer choices for both groups, so such an assumption can only lead to negative results (nonexistence).

\begin{remark}
    There are three classes of correspondences for games $M_{\alpha}$ with $\kappa_1+\kappa_2<0$. 
\end{remark}
The shape of the correspondence of a game depends only on the order of the thresholds of each class of equilibria.\sidenote{A careful analysis of Appendix \ref{app:tc} can make this precise.} Only those in which no group splits (there are four) can change order because when a group splits, the equilibria can only connect two other thresholds. The latter four classes are symmetric relative to $p^a=p^b$. The condition $\kappa_1+\kappa_2<0$ further restricts the possibilities ($\kappa_1>0$ induces an isomorphic correspondence to $\kappa_2>0$). Only two classes can change order (for example, for $\kappa_1>0$, the order of $(0,0)$ and $(1,0)$). When $\det M_\alpha>0$ and $\kappa_1+\kappa_2<0$, classes $(0,1)$ and $(1,0)$ are empty. Naturally, the sizes and slope of each segment may be different for correspondences in the same class.

\begin{figure}[ht]
    \begin{fullwidth}
        \includegraphics[width=0.2915\paperwidth]{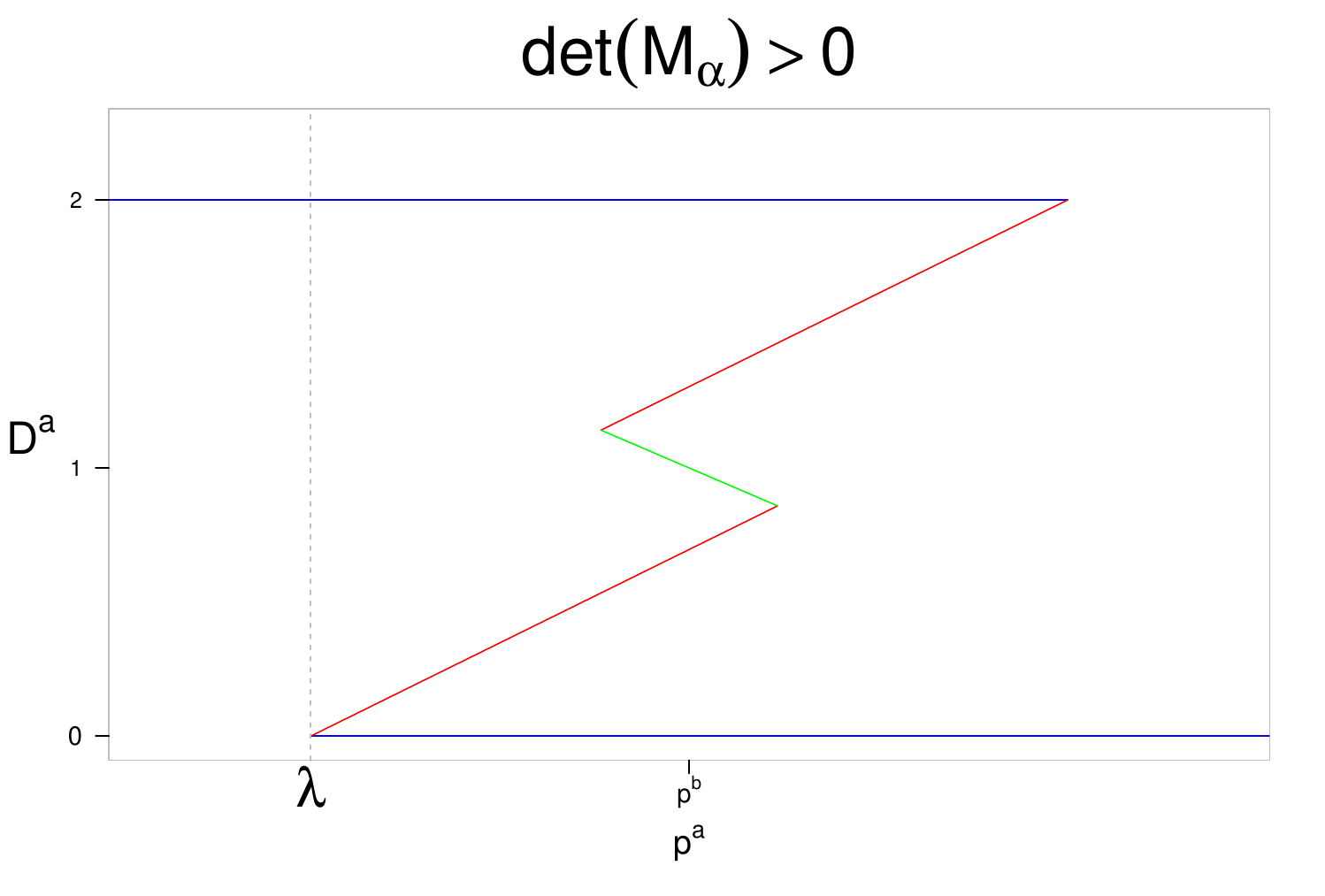}
        \includegraphics[width=0.2915\paperwidth]{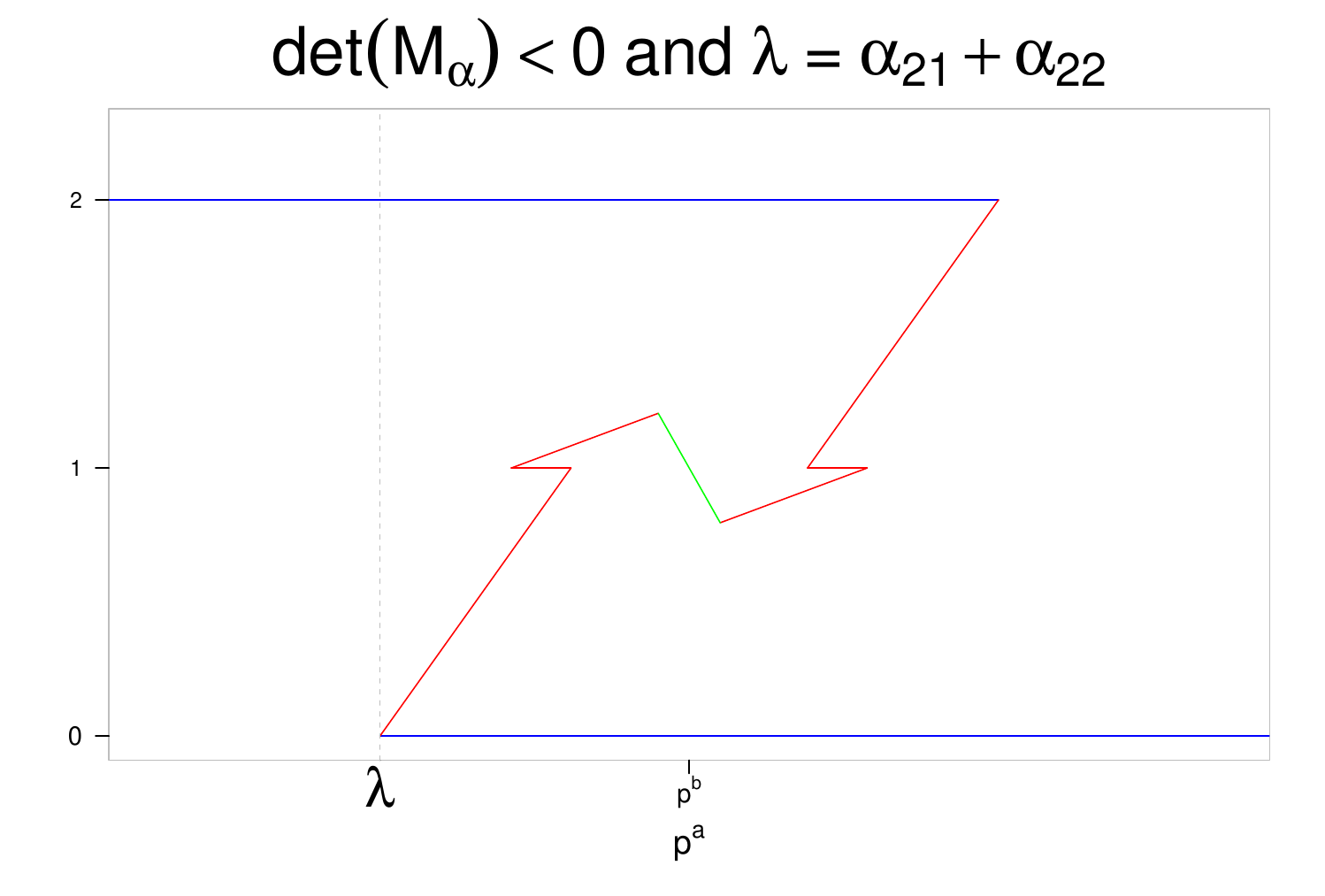}
        \includegraphics[width=0.2915\paperwidth]{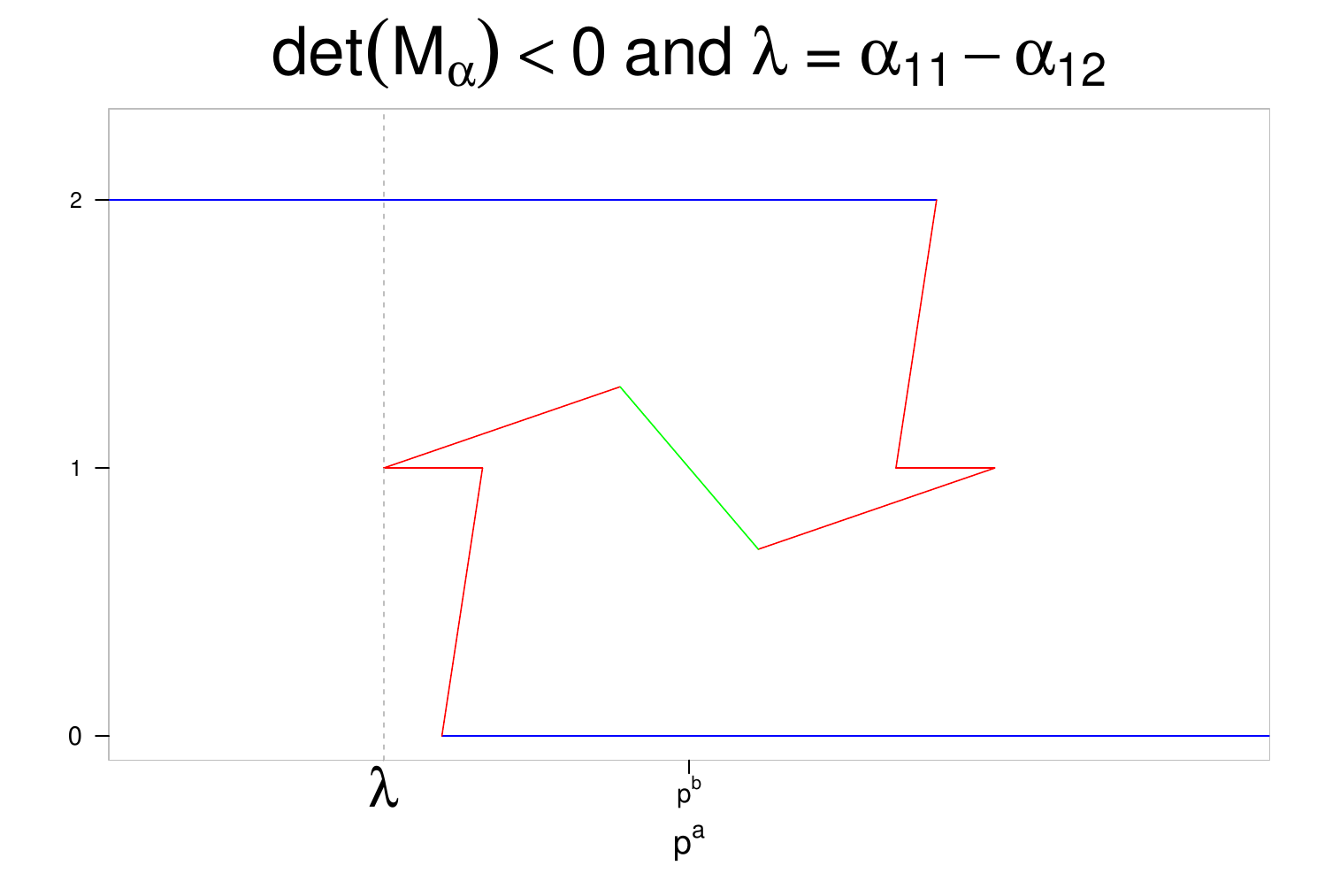}
    \caption{The three classes of correspondences (restricted to $\mathbb{R}_{\ge 0}\times\{p^b\}$) for games with $\kappa_1+\kappa_2<0$.}
    \label{f:C}
    \end{fullwidth}
\end{figure} 

We finish this section with the last result, a sufficient condition for a game to have an equilibrium with the outcome given by Theorem \ref{t:prices}. As $\kappa_1+\kappa_2<0$ forces $\kappa_1\kappa_2<0$ we will assume $\kappa_1>0$. Let\footnote{The definition of $\lambda$ is tighter than it needs to be. It means that if  $\kappa_1+\kappa_2> -\frac{1}{\lambda}$, no equilibrium contains the local equilibrium of Theorem \ref{t:prices}. A simpler choice for Theorem \ref{t:existence} would be, for example, $\lambda=\min_i\{\alpha_{i1}+\alpha_{i2}\}$, whose solution depends only on the sign of the determinant.}
$$\lambda=\lambda_{M_{\alpha},\kappa_1>0}=
	\left\lbrace 
	\begin{array}{ll}
		\alpha_{11}+\alpha_{12} & \mbox{if} ~~ \det M_{\alpha}>0 \\
		\max \{\alpha_{22}+\alpha_{21},\alpha_{11}-\alpha_{12}\} & \mbox{if} ~~  \det M_{\alpha}<0.\\ 
	\end{array}
    \right.
$$
Note that $\lambda>0$. Later, we will discuss an interpretation for $\lambda$. The role in the next theorem is that it depends on which of the three classes of correspondences a game with $\kappa_1+\kappa_2<0$ belongs to. These are depicted in Figure \ref{f:C}.
\begin{theorem}
    \label{t:existence}
    There is an equilibrium with positive profit for both firms if $\kappa_1+\kappa_2\le -\frac{1}{\lambda}$.
\end{theorem}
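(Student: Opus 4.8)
The plan is to lift the local equilibrium singled out by Theorem~\ref{t:prices} to a full equilibrium of the game $M_\alpha$, by pinning down the selection away from a neighbourhood of the diagonal and then bounding every price deviation. Since $\kappa_1+\kappa_2\le-1/\lambda<0$, Proposition~\ref{p:kpos} gives $\kappa_1\kappa_2<0$, and we assume $\kappa_1>0$ as in the definition of $\lambda$. By Theorem~\ref{t:prices} the candidate is $\mathbf{p}^*=(p^*,p^*)$ with $p^*=2/|\kappa_1+\kappa_2|$, where the both-split branch gives $D^a=D^b=1$ and $\Pi^a=\Pi^b=p^*>0$. As $\mathbf{p}^*$ lies on the diagonal, $|p^a-p^b|=0<\min\{1/|\kappa_1|,1/|\kappa_2|\}$, so by Lemma~\ref{l:demand} the both-split equilibria are selectable on a box $B=\{|p^a-p^b|<\delta,\ |p^a-p^*|<\delta\}$ around $\mathbf{p}^*$; fix $\delta$ smaller than $\min\{1/|\kappa_1|,1/|\kappa_2|\}$ and than $m:=\min_i\{\alpha_{i1}+\alpha_{i2}\}$, the latter being positive because $\det M_\alpha\neq0$. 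Finally, note that the hypothesis $\kappa_1+\kappa_2\le-1/\lambda$ is exactly $p^*\le2\lambda$; this is the one inequality to be spent.

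I would take $\boldsymbol{\sigma}$ to be: on $B$, the both-split branch (this makes $\boldsymbol{\sigma}$ continuous on the neighbourhood $B$ of $\mathbf{p}^*$ and fixes the claimed profit); on the line $\{p^b=p^*\}$ outside $B$, the monopoly-$b$ equilibrium (all consumers at $b$) whenever it is a consumption Nash equilibrium — which, by the thresholds of Appendix~\ref{app:tc}, holds iff $p^a\ge p^*-m$ — and an arbitrary consumption equilibrium otherwise; symmetrically on $\{p^a=p^*\}$ outside $B$ (the two lines meet only at $\mathbf{p}^*\in B$, so there is no conflict); and an arbitrary consumption equilibrium everywhere else. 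By the symmetry of the firms, and since a pricing Nash equilibrium need only withstand unilateral deviations, it suffices to prove $\Pi^a(p^a,p^*)\le\Pi^a(p^*,p^*)=p^*$ for all $p^a\ge0$, and this concerns only the line $\{p^b=p^*\}$, on which $\boldsymbol{\sigma}$ has now been fixed.

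Partition $[0,\infty)$ into three exhaustive ranges of $p^a$ (using $\delta<m$). (i) For $|p^a-p^*|<\delta$, $\boldsymbol{\sigma}$ is the both-split branch, so by Lemma~\ref{l:demand} $\Pi^a(p^a,p^*)=p^a\big(1+\tfrac{\kappa_1+\kappa_2}{2}(p^a-p^*)\big)$, a concave parabola in $p^a$ whose vertex is exactly $p^a=p^*$; hence $\Pi^a\le p^*$. (ii) For $p^a\ge p^*-m$ with $|p^a-p^*|\ge\delta$, monopoly-$b$ is selectable, so $D^a=0$ and $\Pi^a=0$. (iii) For $0\le p^a<p^*-m$, firm $a$ cannot be shut out, but $D^a\le2$ always, so $\Pi^a\le2p^a<2(p^*-m)$, which is $\le p^*$ precisely when $p^*\le2m$. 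Under the weaker hypothesis $\kappa_1+\kappa_2\le-1/m$ this already yields the statement — the simpler version flagged in the footnote — exhibiting an equilibrium $(\mathbf{p}^*,\boldsymbol{\sigma})$ of $M_\alpha$ with profit $p^*>0$ for both firms.

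To obtain the sharp, determinant-dependent $\lambda$, only range (iii) is refined: the estimate $D^a\le2$ is wasteful near its upper end $p^a\uparrow(p^*-m)$, where $\Pi^a=p^aD^a$ is largest, since there the correspondence typically also offers the both-split equilibrium or an equilibrium in which only one group splits, whose demand is strictly below $2$. Selecting the smallest available demand and tracking it across range (iii) — a computation that, by Appendix~\ref{app:tc}, organises itself according to the sign of $\det M_\alpha$, that is, into the three correspondence shapes of Figure~\ref{f:C} and the two branches of the formula for $\lambda$ (when $\det M_\alpha>0$ two of the nine equilibrium classes are empty, which is what simplifies the formula) — gives $\Pi^a\le p^*$ under exactly $p^*\le2\lambda$. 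I expect this bookkeeping of which consumption equilibria coexist at which prices, carried out separately in the two determinant regimes, to be the main obstacle; the continuity and single-peakedness ingredients come for free from Lemma~\ref{l:demand} and Theorem~\ref{t:prices}.
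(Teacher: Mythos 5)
Your construction is essentially the paper's own proof: same candidate $\mathbf{p}^*=(p^*,p^*)$ with $p^*=2/|\kappa_1+\kappa_2|$, same three-way partition of the deviation line $\{p^b=p^*\}$ (both-split branch near $p^*$, a zero- or low-demand selection in an intermediate band, and the crude bound $\Pi^a\le 2p^a$ for deep undercuts), and the hypothesis $p^*\le 2\lambda$ spent in exactly the same place. The only substantive difference is cosmetic: you select the both-split branch on a small $\delta$-box rather than on its full domain $|p^a-p^b|<1/|\kappa_2|$, which works equally well since the monopoly-$b$ equilibrium is selectable everywhere on $p^a\ge p^*-m$ with $m=\min_i\{\alpha_{i1}+\alpha_{i2}\}$.

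The one place where your argument is not finished is the refinement of range (iii). As you note, your completed computation proves the theorem only under $\kappa_1+\kappa_2\le-1/m$. Since $\lambda=m$ whenever $\det M_\alpha>0$ (there $\alpha_{11}<\alpha_{21}$ and $\alpha_{12}<\alpha_{22}$ force $\min_i\{\alpha_{i1}+\alpha_{i2}\}=\alpha_{11}+\alpha_{12}$), and also whenever $\det M_\alpha<0$ with $\alpha_{21}+\alpha_{22}\ge\alpha_{11}-\alpha_{12}$, the only genuinely open subcase is $\det M_\alpha<0$ with $\alpha_{11}-\alpha_{12}>\alpha_{21}+\alpha_{22}$, i.e.\ $\lambda=\alpha_{11}-\alpha_{12}>m$. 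There the missing fact — which is precisely the content of the paper's Claim — is that for every $p^a$ in the window $(p^*-\lambda,\,p^*-m)$ the ``one group at each firm'' configuration is a consumption Nash equilibrium (by the thresholds of Appendix~\ref{app:tc} its price domain reaches down to $p^*-(\alpha_{11}-\alpha_{12})=p^*-\lambda$), so the selection can hold $D^a\le 1$ there and hence $\Pi^a\le p^a<p^*$; the crude bound $D^a\le 2$ is then only ever invoked for $p^a<p^*-\lambda$, which is what converts your $p^*\le 2m$ into the claimed $p^*\le 2\lambda$. You correctly anticipate that this bookkeeping is the crux, but you do not carry it out, so as written the proposal establishes the footnoted weaker version rather than the stated theorem.
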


The condition to ensure existence is that the slope of demand cannot be too close to zero. We can also rewrite $\kappa_1+\kappa_2<-\frac{1}{\lambda}$ (when $\kappa_1>0$) as $\kappa_1<|\kappa_2|-\frac{1}{\lambda}$, which means, besides marginal demand of each group having different signs, they need to be sufficiently different. Note also that equilibrium prices are $\frac{2}{|\kappa_1+\kappa_2|}$, when $\kappa_1+\kappa_2$ is too small equilibrium prices will be high, giving margin for the firm to still have a sufficiently high price after the price undercut which guarantees monopoly (that price is $p^*-\lambda$), thus making it profitable.

The proof (left for Appendix \ref{app:pt2}) constructs an extension of the local equilibrium of Theorem \ref{t:prices} to a (global) equilibrium. We first show that the condition given in Theorem \ref{t:existence} ensures the local equilibrium of Theorem \ref{t:prices} exists and implicitly defines the equilibrium selection for a subset of prices $P$ around $\mathbf{p}^*$, where both groups split. The construction then follows by dividing the deviation price region $\mathbb{R}_{\ge 0}\times\{p^*\}$ in three: (i) prices above $p^*-\lambda$ that are outside of the interval defined by $P$ (two regions); and (ii) prices below $p^*-\lambda$. The conclusion follows by showing: 1) there are always selections that induce a demand function for which deviations in (i) are not profitable; and 2) capturing the whole market for prices in (ii) produces profits below the local equilibrium profit.

In proving Theorem \ref{t:existence}, we needed to define the equilibrium selection for a deviation region where both groups splitting is not selectable anymore. At those prices, demand is necessarily non-differentiable. Furthermore, it's a region with a multiplicity of equilibria, where, in particular, firms can lose or gain the whole market.
In this light, it is somewhat more natural to interpret $\boldsymbol{\sigma}$ as the firms' \textit{deviation demand forecasts}. Given an outcome $(\mathbf{p}^*, \boldsymbol{\sigma}^*)$, firm $a$ evaluates whether a deviation is profitable according to $\boldsymbol{\sigma}|_{\mathbb{R}_{\ge 0}\times\{p^{b*}\}}$, and it is in that sense that, for a fixed $p^b$, the restriction of $\boldsymbol{\sigma}$ is a firm $a$ forecast about the consumer response to a price deviation. The results of this work suggest that network effects create a local coordination mechanism for demand forecasts. Outside that, a bankruptcy scenario creates a fear of deviation in firms. In line with Becker's argument, firms won't deviate because the fear of going from being 'in' to being 'out' and losing their whole market outweighs the possibility of increasing profit.

\citet{Becker:1991} argument can be associated with the scenario $\kappa_1+\kappa_2>0$, where Theorem \ref{t:existence} does not apply. However, if the phenomenon Becker alluded to happens only partially, not affecting all groups of consumers, then Theorem \ref{t:prices} shows that it is possible to obtain positive profits for both firms. The bandwagon effect can provoke positive marginal demand in one group of consumers, with $\kappa_i>0$, but not for the other group, with $\kappa_{j}<0$. In this case, we translate Becker's argument from demand to the group level, and we can still have negative (overall) marginal demand. With this, it is possible to find an equilibrium with positive profits for both firms and thus avoid the nonexistence result of \citet{Karni:1994:SAA} for Becker's case.\sidenote{The comment refers only to the argument in \citep{Becker:1991}. Here, there are no capacities.} So, maybe Becker was right: it is possible the restaurants had similar amenities.
\section{Discussion}
\label{s:discussion}

    What characteristics do games with $\kappa_1+\kappa_2<0$ have? Are there many instances? Do they represent natural economic or social situations? The key to understanding this lies in the influence groups have on each other and whether it \textit{'flows'} in the same direction. Let us start with the following definition.
    \begin{definition}
        We say that the influence of group i is \textbf{centripetal}, if $\alpha_{ii}>\alpha_{ji}$, and \textbf{centrifugal}, if $\alpha_{ii}<\alpha_{ji}$.
    \end{definition}
    The definition relates intragroup and intergroup influences. If a consumer of a centripetal group changes her decision, that will have a higher impact on consumers from its own group than from the other group. The opposite happens for a group with centrifugal influence.
    The concepts of centripetal and centrifugal influence describe how a consumer impacts another consumer. It does not reveal how a consumer is influenced by one group or the other. Suppose group 1 is centripetal, i.e., $\alpha_{11}>\alpha_{21}$. The latter condition does not define $\alpha_{12}$, which determines how group 2 impacts group 1.

    The determinant reveals whether the group with negative marginal demand is centripetal or centrifugal. Note that $\kappa_{1}\det M_{\alpha}=\alpha_{22}-\alpha_{12}$. Suppose $\det M_{\alpha}>0$. If group $2$ has centripetal influence, then $\kappa_{1}>0$. If group $2$ has centrifugal influence, then $\kappa_{1}<0$. The reverse happens if $\det M_{\alpha}<0$. In sum, the type of influence of one group determines the sign of the marginal demand of the other group ($\kappa_i$).

    By Proposition \ref{p:kpos}, in a game where $\kappa_1+\kappa_2<0$, we must have $\kappa_1\kappa_2<0$, which leads to the following remark.
    \begin{remark}
        \label{r:centri}
        A game $M_{\alpha}$ has $\kappa_1\kappa_2<0$ if, and only if, the influence of one group is centripetal and the other centrifugal. If the influence of both groups is centripetal (or centrifugal), then $\kappa_1+\kappa_2>0$.
    \end{remark}
    The consequence of Remark \ref{r:centri} is that a game has Hotelling selections and the corresponding equilibria discussed in the previous sections only if one group has centripetal influence and the other has centrifugal influence.

    Let $cp$ and $cf$ denote the group with centripetal and centrifugal influence. We note that $(\kappa_1+\kappa_2)\det M_{\alpha}=\alpha_{11}+\alpha_{22}-(\alpha_{12}+\alpha_{21})$. When $\kappa_1+\kappa_2<0$, the following relation between inter and intragroup influences holds,
    \begin{enumerate}[label=(\roman*)]
        \item when $\det M_{\alpha}>0$ we have $\alpha_{11}+\alpha_{22}<\alpha_{12}+\alpha_{21}$ and $\kappa_{cp}<0$.
        \item when $\det M_{\alpha}<0$ we have $\alpha_{11}+\alpha_{22}>\alpha_{12}+\alpha_{21}$ and $\kappa_{cf}<0$
    \end{enumerate}
    The sign of the determinant reverses whether the centripetal or centrifugal group dominates marginal demand.

\subsection{Illustrative example 2 (Numeric)}
    Inspired in \citet{Armstrong:2006:CIT}, let us discuss a structure with only cross-group externalities. The scope of the work is distinct from the one here, where firms are two-sided platforms, but we find it elucidating to discuss the underlying structure and how to modify it to meet the conditions for the results here. The interaction structure is the following,
    $$
    M_{\alpha}= 
    \begin{pmatrix}
        0 & \alpha_{12}\\ 
        \alpha_{21} & 0
    \end{pmatrix}.
    $$
    A Hotelling specification is used for each group, coupled with a bound on network effects, to guarantee shared market solutions. The influence of both groups is centrifugal. Thus, differentiation of platforms is necessary!

    It is possible to achieve the shared market solutions discussed in the previous sections without having differentiated platforms. If we modify the structure by changing the diagonal elements, making one group centripetal, say $\alpha_{11}$ from $0$, to a value higher than $\alpha_{21}$. In choosing, we must meet the conditions that ensure a shared market solution given by Theorem \ref{t:existence}. In the light of the examples in \citep{Armstrong:2006:CIT}, where one group is consumers, say group 1, and the other are providers, this means consumers impact each other choices. If we make it centripetal, it means that influence is greater than the one on providers. For example, having more consumers can work as a signaling of the quality or safety of the platform; it could also be that it is more enjoyable when other consumers are using it, or it could represent the online rating associated with it, etc.

    For clarity, let's proceed with a numerical example. Suppose $\alpha_{12}=4$ and $\alpha_{21}=1$, and we want to add $\alpha_{11}$,
    $$
    M_{\alpha}= 
    \begin{pmatrix}
        \alpha_{11} & 4\\ 
        1 & 0
    \end{pmatrix}.
    $$
    We get $\det M_{\alpha}=-4$, $\kappa_1=1$, $\kappa_2=\frac{1-\alpha_{11}}{4}$. Hence, for $\kappa_2<0$ we need $\alpha_{11}>1$. Now, noting that $\kappa_1+\kappa_2=\frac{5-\alpha_{11}}{4}$, we need $\alpha_{11}>5$ to apply Lemma \ref{l:demand} and Theorem \ref{t:prices}. Finally, let's meet the condition in Theorem \ref{t:existence}. As $\lambda=\max\{1,\alpha_{11}-4\}=\alpha_{11}-4$, that condition becomes $-\alpha_{11}^2+9\alpha_{11}-16<0$, which is true for $\alpha_{11}<\frac{-\sqrt{17}+9}{2}$ or $\alpha_{11}>\frac{-\sqrt{17}+9}{2}>5$. Therefore, we meet all conditions by choosing $\alpha_{11}>\frac{-\sqrt{17}+9}{2}>5$. So, for example, if we choose $\alpha_{11}=7$, we get equilibrium prices $p^*=4$.

    The example might give the idea that we need group 1, the centripetal group, to value more consumers from its group than from the other group $\alpha_{11}>\alpha_{12}$, but that is only an artifact of having $\alpha_{22}=0$. Suppose we want a structure in which that doesn't happen, as
    $$
    M_{\alpha}= 
    \begin{pmatrix}
        2 & 4\\ 
        1 & \alpha_{22}
    \end{pmatrix}.
    $$
    In this case, following analogously, to obtain $\kappa_1+\kappa_2<0$ we need to choose $2<\alpha_{22}<3$. Then, as $\lambda=6$, for Theorem \ref{t:existence}, we need $\alpha_{22}<\frac{11}{4}$. As such, we can choose any value for $\alpha_{22}$ in the interval $(2,\frac{11}{4})$ to guarantee a shared solution of the previous sections. So, for example, if we choose $\alpha_{22}=\frac{10}{4}$ we get equilibrium prices $p^*=4$.

\newpage
\section{Concluding remark}
\blockquote[\cite{Amir:2018}]{(...) industrial organization deviated from standard microeconomics early on by postulating a downward-sloping demand as a primitive. (...)
Given this dichotomy between industrial organization and the broader microeconomics field, there is some theoretical motivation for developing results in oligopoly theory that do not rely on a downward-sloping demand as a primitive.}

We connect with this motivation by providing another way of deriving a decreasing demand function from maximizing a utility. Namely, one that depends only on positive network effects and prices. 

There is another way in which this new possibility departs from relying on downward-sloping demand as a primitive. Studying network effects need not be built on top of a downward-slopping demand founded on some other effect, while network effects appear as a perturbation. The present note provides an alternative way to reverse the approach. Take network effects as your primitive element, then perturb the resulting demand with another effect. In that way, we have a different road to understanding the role of network effects in shaping market outcomes and decisions.
\printbibliography
\if@endfloat\clearpage\processdelayedfloats\clearpage\fi 

\begin{appendix}

   \begin{appendixbox}\label{app:pt2}
      \section{Proof of Theorem \ref{t:existence}}
(\textit{There is an equilibrium with positive profit for both firms if $\kappa_1+\kappa_2\le -\frac{1}{\lambda}$.})

\subsection{Proof}
Suppose $\kappa_1+\kappa_2\le-\frac{1}{\lambda}$ and, wlg, $\kappa_1>0$. 

We will build the equilibrium $(\mathbf{p}^*, \boldsymbol{\sigma})$. Assume $\mathbf{p}^*=(p^*,p^*)$ and note that we need only define the selection $\boldsymbol{\sigma}$ for $\mathbb{R}_{\ge 0}\times\{p^*\}$ and $\{p^*\}\times\mathbb{R}_{\ge 0}$ as the rest is unattainable by unilateral deviation, thus being irrelevant in the Nash equilibrium condition (note a choice always exists by \citep{Nash:1951:NCG}). 

Let $Q^a_{p^*}$ be the restriction of $Q$ to $\mathbb{R}_{\ge 0}\times\{p^*\}$ and $Q^b_{p^*}$ the restriction of $Q$ to $\{p^*\}\times\mathbb{R}_{\ge 0}$ (the values of $\boldsymbol{\sigma}$ attainable by $a$ or $b$ through deviation when the other firm charges $p^*$). As everything is symmetric for firms, we will do the proof for firm $a$ and it follows analogously for $b$. From now on we will omit the firm superscript.

\paragraph{Local equilibrium: $p \in (p^*-\frac{1}{|\kappa_2|}, p^*+\frac{1}{|\kappa_2|})$}

First observe that because $\kappa_2<0$ and $\lambda>0$, $\kappa_1+\kappa_2<-\frac{1}{\lambda}$ implies $\kappa_1+\kappa_2<0$, and $\min\left\lbrace\frac{1}{|\kappa_1|},\frac{1}{|\kappa_2|}\right\rbrace =\frac{1}{|\kappa_2|}$. Consider now a subset $P\subseteq \mathbb{R}_{\ge0}^2$ such that for every $\mathbf{p}\in P$, $|p^a-p^b| < \frac{1}{|\kappa_2|}$. In $P$, let $\boldsymbol{\sigma}$ always select equilibria where both groups split (which is possible, as $P$ is by definition its domain, see Appendix \ref{app:pt2}). Let $p^*=\frac{2}{|\kappa_1+\kappa_2|}$. By Theorem \ref{t:prices} $(\mathbf{p}^*, \boldsymbol{\sigma}|_P)$ is a local equilibrium where both firms have positive profit. 

We need now only to define $\boldsymbol{\sigma}$ outside $P$, so that no firm wants to deviate to prices $p>0$ such that $p<p^*-\frac{1}{\kappa_2}$ or $p>p^*+\frac{1}{\kappa_2}$. 

\paragraph{Regions $p<p^*-\frac{1}{|\kappa_2|}$ and $p>p^*+\frac{1}{|\kappa_2|}$.}
We will divide the first region in two, and use the following claim.
\begin{claim} 
    \label{c:min}
    The following holds,
    \begin{enumerate}[label=\emph{(\roman*)}]
        \item $\min \{Q_{p^*}(p)\}\in\{0,1\}$ for every $p\in (p^*-\lambda, p^*-\frac{1}{|\kappa_2|})$;
        \item $\min \{Q_{p^*}(p)\}=0$ for every $p> p^*+\frac{1}{|\kappa_2|}$.
    \end{enumerate}
\end{claim}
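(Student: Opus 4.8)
The plan is to argue directly from the second-stage best-response structure, classifying equilibria by the corner/split status of each group, rather than through the demand formulas. Fix the deviation price $p$ of firm $a$ against $p^{b}=p^{*}$ and write $\delta=p-p^{*}$. A member of group $i$ best-responds with $\sigma_{i}=1$, $\sigma_{i}=0$, or $\sigma_{i}\in(0,1)$ according to whether $-\delta+\alpha_{i1}(2\sigma_1-1)+\alpha_{i2}(2\sigma_2-1)$ is positive, negative, or zero, so a second-stage equilibrium belongs to one of nine classes and $\min\{Q_{p^{*}}(p)\}$ is the least value of $\sigma_1+\sigma_2$ over the classes that are selectable at $\delta$ — the selectability ranges being exactly the thresholds tabulated in Appendix~\ref{app:tc}. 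Part~(ii) is then immediate: for $\delta>0$ (which covers $p>p^{*}+\tfrac1{|\kappa_2|}$) we have $-\delta-\alpha_{i1}-\alpha_{i2}<0$ for $i=1,2$ — here I use that $\det M_{\alpha}\neq0$ forces every row of $M_{\alpha}$ to be nonzero — so $(0,0)$ is a strict equilibrium, hence $0\in Q_{p^{*}}(p)$, and since demand is nonnegative $\min\{Q_{p^{*}}(p)\}=0$.

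For part~(i), fix $p$ with $\delta\in(-\lambda,-\tfrac1{|\kappa_2|})$, so $\delta<0$. Since $\kappa_1>0$ and $\kappa_1+\kappa_2<0$ we get $|\kappa_2|>\kappa_1$, hence $\min\{\tfrac1{|\kappa_1|},\tfrac1{|\kappa_2|}\}=\tfrac1{|\kappa_2|}$; as $|\delta|>\tfrac1{|\kappa_2|}$ the both-split class lies outside its domain (Lemma~\ref{l:demand}) and is not selectable, so the only classes with total demand strictly in $(0,1)$ are $(0,\mathrm{split})$ and $(\mathrm{split},0)$. The key observation is that neither can occur without $(0,0)$ also being an equilibrium: if $\sigma_{1}=0$ and group $2$ splits, its indifference gives $2\sigma_2-1=\tfrac{\delta+\alpha_{21}}{\alpha_{22}}>-1$, i.e.\ $\delta>-(\alpha_{21}+\alpha_{22})$, while group $1$'s corner inequality $-\delta-\alpha_{11}+\alpha_{12}(2\sigma_2-1)\le0$, combined with $2\sigma_2-1>-1$, gives $\delta>-(\alpha_{11}+\alpha_{12})$; together these are precisely the condition for $(0,0)$ to be an equilibrium (the symmetric computation handles $(\mathrm{split},0)$, and the degenerate cases $\alpha_{ii}=0$, where that group cannot strictly split, resolve the same way). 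Hence, whenever some equilibrium has $D^{a}\in(0,1)$, so does $(0,0)$, and then $\min\{Q_{p^{*}}(p)\}=0$.

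It remains to show that in this region at least one of $(0,0)$ and $(0,1)$ is selectable. This closes the argument: the remaining classes that can be selectable in region~(i), namely $(1,1)$, $(1,\mathrm{split})$, $(\mathrm{split},1)$, all have $D^{a}>1$, so without a selectable class with $D^{a}\le1$ one would have $\min\{Q_{p^{*}}(p)\}\in(1,2]$; but if $(0,0)$ is selectable the minimum is $0$, whereas if $(0,0)$ is not selectable then by the contrapositive of the key observation no equilibrium has $D^{a}\in(0,1)$, so selectability of $(0,1)$ pins the minimum at exactly $1$ — in all cases $\min\{Q_{p^{*}}(p)\}\in\{0,1\}$. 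For the selectability claim I split on $\operatorname{sign}\det M_{\alpha}$, using Proposition~\ref{p:kpos} and Remark~\ref{r:centri} to fix which group is centripetal: when $\det M_{\alpha}>0$ one checks $\alpha_{11}+\alpha_{12}<\alpha_{21}+\alpha_{22}$, so $\lambda=\min_{i}\{\alpha_{i1}+\alpha_{i2}\}$ and $\delta>-\lambda$ already makes $(0,0)$ an equilibrium; when $\det M_{\alpha}<0$ one has $\min_{i}\{\alpha_{i1}+\alpha_{i2}\}=\alpha_{21}+\alpha_{22}$ and $\alpha_{11}+\alpha_{22}>\alpha_{12}+\alpha_{21}$, so $(0,1)$ is selectable exactly on the nonempty interval $[\alpha_{12}-\alpha_{11},\alpha_{22}-\alpha_{21}]$ and $(0,0)$ on $[-(\alpha_{21}+\alpha_{22}),\infty)$; since $\alpha_{22}-\alpha_{21}\ge-(\alpha_{21}+\alpha_{22})$ their union is $[-\max\{\alpha_{21}+\alpha_{22},\,\alpha_{11}-\alpha_{12}\},\infty)=[-\lambda,\infty)$, which contains $\delta$.

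Given the threshold table of Appendix~\ref{app:tc}, the bookkeeping above is mechanical; the one genuinely load-bearing step — and the one I expect to be the main obstacle to write cleanly — is the $\det M_{\alpha}<0$ half of the previous paragraph, where the two-branch definition of $\lambda$ must be matched precisely to the union of the selectability ranges of $(0,0)$ and $(0,1)$, so that $-\lambda$ emerges as the exact left endpoint below which only equilibria with $D^{a}>1$ (the monopoly-type configurations for firm $a$) survive.
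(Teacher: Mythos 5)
Your proof is correct, and its skeleton is the same as the paper's: both arguments run off the threshold table of Appendix~\ref{app:tc}, note that $(0,0)$ is selectable for all $\Delta p\ge -\min_i\{\alpha_{i1}+\alpha_{i2}\}$ (which immediately gives part~(ii)), and for part~(i) split on the sign of $\det M_{\alpha}$ so as to match the two-branch definition of $\lambda$ to the union of the selectability ranges of $(0,0)$ and of the $D^a=1$ corner class on $[\,p^*-\lambda,\infty)$.

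The one place you genuinely go beyond the paper is your ``key observation'' that any selectable equilibrium with $D^a\in(0,1)$ --- i.e.\ of type $(0,\sigma_2)$ or $(\sigma_1,0)$ --- forces $(0,0)$ to be selectable as well, since the interior indifference condition and the other group's corner condition together imply $\Delta p\ge -\min_i\{\alpha_{i1}+\alpha_{i2}\}$. The paper's own proof of part~(i) only establishes that $0\in Q_{p^*}(p)$ or $1\in Q_{p^*}(p)$ throughout the interval; that yields $\min\{Q_{p^*}(p)\}\le 1$, which is all that is actually used in the proof of Theorem~\ref{t:existence}, but it does not by itself give $\min\{Q_{p^*}(p)\}\in\{0,1\}$ as literally stated, because one must still rule out a selectable partial-split equilibrium with demand strictly between $0$ and $1$ at prices where $(0,0)$ is not selectable. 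Your observation (including the degenerate $\alpha_{ii}=0$ cases) closes exactly that gap, so your write-up proves the claim as stated rather than the weaker inequality. The rest of your bookkeeping --- the exclusion of the both-split class outside $|\Delta p|<\tfrac{1}{|\kappa_2|}$, the identification of $\lambda=\min_i\{\alpha_{i1}+\alpha_{i2}\}$ when $\det M_{\alpha}>0$, and the union $[\alpha_{12}-\alpha_{11},\alpha_{22}-\alpha_{21}]\cup[-(\alpha_{21}+\alpha_{22}),\infty)=[-\lambda,\infty)$ when $\det M_{\alpha}<0$ --- checks out and coincides with the paper's computation (where the $D^a=1$ class on that interval should be read as $(0,1)$, as you have it).
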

Using Lemma \ref{l:demand}, we get that $D(\mathbf{p}^*)=1$. As such, let $\boldsymbol{\sigma}$ always select the minimum value of $Q_{p^*}(p)$ for $p$ in the regions (i) and (ii) of Claim \ref{c:min}. By the latter claim, we get that $D(p,p^*)\le 1$ for $(p^*-\lambda, p^*-\frac{1}{|\kappa_2|})$, and that $D(p,p^*)=0$ for $p> p^*+\frac{1}{|\kappa_2|}$. Therefore, the firm has no gain from deviating to those prices. We are left with the region $p\le p^*-\lambda$ to conclude the proof.
\vfill
\vspace*{1ex}
Note that $\Pi(p,p^*)\le 2p$, for all $p\ge0$. Let us show that $\Pi(\mathbf{p}^*)\ge 2 p$ for $p<p^*-\lambda$. Note that $\Pi(\mathbf{p}^*)=p^*$, thus, we need to show $p^*\ge 2(p^*-\lambda)$, that is $p^*\le 2\lambda$. As $p^*=\frac{2}{|\kappa_1+\kappa_2|}$, we get $\frac{1}{|\kappa_1+\kappa_2|}\le \lambda$ which is $\kappa_1+\kappa_2\le-\frac{1}{\lambda}$, true by assumption. 

\subsubsection{Proof of Claim}
Recall that
$$\lambda=\lambda_{M_{\alpha},\kappa_1>0}=
	\left\lbrace 
	\begin{array}{ll}
		\alpha_{11}+\alpha_{12} & \mbox{if} ~~ \det M_{\alpha}>0 \\
		\max \{\alpha_{22}+\alpha_{21},\alpha_{11}-\alpha_{12}\} & \mbox{if} ~~  \det M_{\alpha}<0.,\ 
	\end{array}
    \right.
$$
and$$\kappa_1=\frac{\alpha_{22}-\alpha_{12}}{\det M_{\alpha}}, ~~~~\kappa_2=\frac{\alpha_{11}-\alpha_{21}}{\det M_{\alpha}}.$$
When $\kappa_1>0$ and $\kappa_2<0$, we have, respectively, that, (1) $\det M_\alpha>0$ implies $\alpha_{22}>\alpha_{12}$ and $\alpha_{11}<\alpha_{21}$; and (2) $\det M_\alpha<0$ implies $\alpha_{22}<\alpha_{12}$ and $\alpha_{11}>\alpha_{21}$. As such, (1) if $\det M_\alpha>0$, then $\lambda=\alpha_{11}+\alpha_{12}$; and (2) if $\det M_\alpha<0$, then $\min_i\{\alpha_{i1}+\alpha_{i2}\}=\alpha_{22}+\alpha_{21}$. Thus, when $\alpha_{21}+\alpha_{22}>\alpha_{11}-\alpha_{12}$, we get $\lambda=\min_i\{\alpha_{i1}+\alpha_{i2}\}$.

\paragraph{(i) $\min \{Q_{p^*}(p)\}\in\{0,1\}$ for every $p\in (p^*-\lambda, p^*-\frac{1}{|\kappa_2|})$.}
By the threshold characterization in Appendix \ref{app:tc}, we have $0\in Q_{p^*}(p)$ for all $p>p^* -\min_i\{\alpha_{i1}+\alpha_{i2}\}$.
Suppose $\alpha_{21}+\alpha_{22}>\alpha_{11}-\alpha_{12}$, then $\lambda=\min_i\{\alpha_{i1}+\alpha_{i2}\}$. 
As such, in these cases, $0\in Q_{p^*}(p)$ for all $p>p^*-\lambda$.
Suppose now $\det M_{\alpha}<0$ and $\alpha_{21}+\alpha_{22}<\alpha_{11}-\alpha_{12}$. Note that $\min_i\{\alpha_{i1}+\alpha_{i2}\}=\alpha_{22}+\alpha_{21}$, now different from $\lambda=\alpha_{11}-\alpha_{12}$.
By the threshold characterization in Appendix \ref{app:tc}, $(1,0)$ represents an equilibrium if $p\in (p^*-(\alpha_{11}-\alpha_{12}),p^*+\alpha_{22}-\alpha_{21})$, thus for the prices in the latter interval $1\in Q_{p^*}(p)$. All is left to show is that, if $p^*+\alpha_{22}-\alpha_{21}<p^*-\frac{1}{|\kappa_2|}$ producing a non-empty interval $I$ where $(1,0)$ no longer has an underlying equilibrium, then for all $p\in I$, $\min \{Q_{p^*}(p)\}\in\{0,1\}$. But $p^*+\alpha_{22}-\alpha_{21}>p^*-(\alpha_{22}+\alpha_{21})=p^* -\min_i\{\alpha_{i1}+\alpha_{i2}\}$, which means that, if $I\neq\emptyset$, then $0\in Q_{p^*}(p)$, for all $p\in I$.

\paragraph{(ii) $\min \{Q_{p^*}(p)\}=0$ for every $p> p^*+\frac{1}{|\kappa_2|}$.}
By the threshold characterization in Appendix \ref{app:tc}, we have $0\in Q_{p^*}(p)$ for all $p>p^* -\min_i\{\alpha_{i1}+\alpha_{i2}\}$. But $p^*+\frac{1}{|\kappa_2|}>p^* -\min_i\{\alpha_{i1}+\alpha_{i2}\}$ because $\min_i\{\alpha_{i1}+\alpha_{i2}\}>0$.

  \end{appendixbox}

   \begin{appendixbox}\label{app:tc}
       \section{Threshold characterization of consumers NE}
There are nine classes of consumer strategies.
We will characterize these classes. Let $\Delta v_i(\boldsymbol{\sigma})=\Delta v_i(\mathbf{p},\boldsymbol{\sigma})=v_i(a;\mathbf{p},\boldsymbol{\sigma})-v_i(b;\mathbf{p},\boldsymbol{\sigma})$.
Note that,
$$\Delta v_i(\boldsymbol{\sigma})= -\Delta p + \alpha_{i1}(2\sigma_1-1)+\alpha_{i2}(2\sigma_2-1).$$
Suppose $\boldsymbol{\sigma}$ represents a second stage Nash equilibrium. Observe that, 
\begin{itemize}
    \item $\sigma_i=1$ if, and only if, $\Delta v_i(\mathbf{p},\boldsymbol{\sigma})\ge 0$;
    \item $\sigma_i=0$ if, and only if, $\Delta v_i(\mathbf{p},\boldsymbol{\sigma})\le 0$;
    \item $0<\sigma_i<1$ if, and only if, $\Delta v_i(\mathbf{p},\boldsymbol{\sigma})=0$.
\end{itemize}
Let $E(\mathbf{p})$ be the set of $\boldsymbol{\sigma}$ that represent a second stage NE for $\mathbf{p}$. The following characterizes $E(\mathbf{p})$ through $\Delta p=p^a-p^b$.

\paragraph{No group splits.}
$(1,1)\in E(\mathbf{p})$ if, and only if, $\Delta p\le\min_i\{\alpha_{i1}+\alpha_{i2}\}$;\\
$(0,0)\in E(\mathbf{p})$ if, and only if, $\Delta p\ge-\min_i\{\alpha_{i1}+\alpha_{i2}\}$.\\
$(1,0)\in E(\mathbf{p})$ if, and only if, $-(\alpha_{22}-\alpha_{21})\le \Delta p \le \alpha_{11}-\alpha_{12}$;\\
$(0,1)\in E(\mathbf{p})$ if, and only if, $-(\alpha_{11}-\alpha_{12})\le \Delta p \le \alpha_{22}-\alpha_{21}$.\\

\paragraph{Only group 1 splits ($0<\sigma_1<1$) - case $(\sigma_1,0)$.}
We need to satisfy $\Delta p = \alpha_{11}(2\sigma_1-1)-\alpha_{12}$, and $\Delta p \ge \alpha_{21}(2\sigma_1-1)-\alpha_{22}$. 
The first has an exact solution and a domain, while the second (substituting by the solution obtained) leads to $\Delta p(\alpha_{11}-\alpha_{21})>- \det M_{\alpha}$. Therefore, noting that $\kappa_2=\frac{\alpha_{11}-\alpha_{21}}{\det M_{\alpha}}$, we get the following cases.\\

When $\alpha_{11}\neq 0$, then $(\sigma_1,0)\in E(\mathbf{p})$ if, and only if,
$$\sigma_1=\frac{\Delta p+\alpha_{11}+\alpha_{12}}{2\alpha_{11}},$$
and $\Delta p$ satisfies
\begin{enumerate}[label=(\roman*)]
    \item $\max\left\{-\frac{1}{\kappa_2},-(\alpha_{11}+\alpha_{12})\right\}<\Delta p<\alpha_{11}-\alpha_{12}$ when $\alpha_{11}>\alpha_{21}$;
    \item $-(\alpha_{11}+\alpha_{12})<\Delta p<\min\left\{\alpha_{11}-\alpha_{12},-\frac{1}{\kappa_2}\right\}$ when $\alpha_{11}<\alpha_{21}$;
    \item $-(\alpha_{11}+\alpha_{12})<\Delta p<\alpha_{11}-\alpha_{12}$ when $\alpha_{11}=\alpha_{21}$, if $\det M_{\alpha}>0$.
\end{enumerate}

When $\alpha_{11}=0$, then $(\sigma_1,0)\in E(\mathbf{p})$ if $\Delta p=-\alpha_{12}$ and $\sigma_1$ satisfies
$$0<\sigma_1 < \frac{\alpha_{22}+\alpha_{21}-\alpha_{12}}{2\alpha_{21}}.$$

\paragraph{Only group 1 splits ($0<\sigma_1<1$) - case $(\sigma_1,1)$}
We need to satisfy $\Delta p = \alpha_{11}(2\sigma_1-1)-\alpha_{12}$, and $\Delta p \le \alpha_{21}(2\sigma_1-1)+\alpha_{22}$. 
As before, this leads to $\Delta p(\alpha_{11}-\alpha_{21})< \det M_{\alpha}$. Therefore, noting that $\kappa_2=\frac{\alpha_{11}-\alpha_{21}}{\det M_{\alpha}}$, we get the following cases.\\

When $\alpha_{11}\neq 0$, then $(\sigma_1,1)\in E(\mathbf{p})$ if, and only if,
$$\sigma_1=\frac{\Delta p+\alpha_{11}-\alpha_{12}}{2\alpha_{11}},$$
and $\Delta p$ satisfies
\begin{enumerate}[label=(\roman*)]
    \item $-(\alpha_{11}-\alpha_{12})<\Delta p<\min\left\{\alpha_{11}+\alpha_{12},\frac{1}{\kappa_2}\right\}$ when $\alpha_{11}>\alpha_{21}$;
    \item $\max\left\{-(\alpha_{11}-\alpha_{12}),\frac{1}{\kappa_2}\right\}<\Delta p<\alpha_{11}+\alpha_{12}$ when $\alpha_{11}<\alpha_{21}$;
    \item $-(\alpha_{11}-\alpha_{12})<\Delta p<\alpha_{11}+\alpha_{12}$ when $\alpha_{11}=\alpha_{21}$, if $\det M_{\alpha}>0$.
\end{enumerate}

When $\alpha_{11}=0$, then $(\sigma_1,0)\in E(\mathbf{p})$ if, and only if, $\Delta p=\alpha_{12}$ and $\sigma_1$ satisfies
$$0<\sigma_1 < \frac{\alpha_{12}+\alpha_{21}-\alpha_{22}}{2\alpha_{21}}.$$

\paragraph{Only group 2 splits ($0<\sigma_2<1$)}
Cases $(0,\sigma_2)$ and $(1,\sigma_2)$ are analogous, just reverse the role of 1 and 2.

\paragraph{Both groups split}
For $0<\sigma_i<1$, $(\sigma_1, \sigma_2)\in E(\mathbf{p})$ if, and only if, it is a solution to the system $\Delta p = \alpha_{11}(2\sigma_1-1)+\alpha_{12}(2\sigma_2-1)$, and $\Delta p = \alpha_{21}(2\sigma_1-1)+\alpha_{22}(2\sigma_2-1)$. The solution is
$$\sigma_1=\frac{\kappa_1}{2}\Delta p+\frac{1}{2}, ~~ \text{and} ~~ \sigma_2=\frac{\kappa_2}{2}\Delta p+\frac{1}{2},$$
which satisfies $0<\sigma_i<1$ when $-1<\kappa_i\Delta p<1$ for $i=1,2$. Thus, when $\Delta p$ is such that
$$-\min_i\left\{\frac{1}{|\kappa_i|}\right\}<\Delta p<\min_i\left\{\frac{1}{|\kappa_i|}\right\}.$$

\subsection{Empty classes}
There are empty classes for games with $\det M_\alpha>0$ and $\kappa_1+\kappa_2<0$. In the latter games, the size of the price domain for the classes $(1,0)$ and $(0,1)$ is $\alpha_{11}+\alpha_{22}-\alpha_{12}+\alpha_{21}$ which is $(\kappa_1+\kappa_2)\det M_\alpha<0$. Furthermore, suppose, the same games have $\kappa_1>0$. Then $\alpha_{22}<\alpha_{12}$ (because of $\det M_\alpha>0$), which leads to $\max\left\{-(\alpha_{22}-\alpha_{21}),\frac{1}{\kappa_1}\right\}=\frac{1}{\kappa_1}$. The classes $(0,\sigma_2)$ and $(1,\sigma_2)$ have a price domain (interval) of size $\alpha_{22}-\alpha_{21}+\frac{1}{\kappa_1}$ which is also $(\kappa_1+\kappa_2)\det M_\alpha<0$.
   \end{appendixbox}

\end{appendix}

\end{document}